\documentclass[11pt]{article}
\usepackage[margin=1in]{geometry}

\usepackage[english]{babel}
\usepackage[utf8]{inputenc}

\usepackage[T1]{fontenc}
\usepackage{kpfonts}
\usepackage{libertine}
\usepackage[scaled=0.85]{beramono}

\usepackage{graphicx}
\usepackage[colorinlistoftodos]{todonotes}
\usepackage{hyperref}
\usepackage{amsthm}
\usepackage{amsmath}
\usepackage{amssymb}
\usepackage{amsfonts}
\usepackage{mathrsfs}
\usepackage{mathtools}
\usepackage{verbatim}
\usepackage{footnote}
\usepackage{algorithm}
\usepackage[noend]{algpseudocode}
\usepackage{lineno}
\usepackage{caption}
\usepackage{tikzsymbols}
\usepackage[capitalize, nameinlink]{cleveref}
\hypersetup{colorlinks={true},linkcolor={blue},citecolor=cyan}
\usepackage{tikz}
\usetikzlibrary{arrows.meta}
\usepackage{boxedminipage2e}

\usepackage{eqparbox,array}

\crefname{theorem}{Theorem}{Theorems}
\Crefname{lemma}{Lemma}{Lemmas}
\Crefname{claim}{Claim}{Claims}
\Crefname{observation}{Observation}{Observations}

\newtheorem{theorem}{Theorem}
\newtheorem{lemma}[theorem]{Lemma}

\newtheorem*{remark}{Remark}

\renewcommand{\ge}{\geqslant}
\renewcommand{\le}{\leqslant}

\renewcommand{\leq}{\leqslant}

\newcommand{\eps}{\varepsilon}

\DeclareMathOperator{\rank}{rank}
\DeclareMathOperator{\spn}{span}

\DeclareMathOperator{\Mark}{Marker}
\DeclareMathOperator{\Cov}{Coverer}


\newcommand{\opt}{\operatorname{OPT}}

\newcommand{\indx}{\textsc{index}\xspace}
\newcommand{\bB}{\mathfrak{B}}
\newcommand{\cM}{\mathcal{M}}
\newcommand{\cI}{\mathcal{I}}
\newcommand{\cJ}{\mathcal{J}}

\newcommand{\inst}{\mathbb{I}}

\newcommand{\guess}{\tau}

\newcommand{\offguess}{\alpha}

\newcommand{\rlb}{q}

\newcommand{\lobebe}{\log(1/\eps)/\eps}
\newcommand{\ldbe}{(\log\Delta)/\eps}

\newcommand{\RR}{\mathbb{R}}


\begin{document}

\title{Small Space Stream Summary for Matroid Center
} \author{
  Sagar Kale \\
  EPFL\\
   \texttt{sagar.kale@epfl.ch}
 }
\date{}

\maketitle

\begin{abstract}
  In the matroid center problem, which generalizes the $k$-center problem, we
  need to pick a set of centers that is an independent set of a matroid with
  rank $r$.  We study this problem in streaming, where elements of the ground
  set arrive in the stream.  We first show that any randomized one-pass
  streaming algorithm that computes a better than $\Delta$-approximation for
  partition-matroid center must use $\Omega(r^2)$ bits of space, where $\Delta$
  is the aspect ratio of the metric and can be arbitrarily large.  This shows a
  quadratic separation between matroid center and $k$-center, for which the
  Doubling algorithm~\cite{CharikarCFM97} gives an $8$-approximation using
  $O(k)$-space and one pass.  To complement this, we give a one-pass algorithm
  for matroid center that stores at most $O(r^2\log(1/\eps)/\eps)$ points (viz.,
  stream summary) among which a $(7+\eps)$-approximate solution exists, which
  can be found by brute force, or a $(17+\eps)$-approximation can be found with
  an efficient algorithm.  If we are allowed a second pass, we can compute a
  $(3+\eps)$-approximation efficiently; this also achieves almost the known-best
  approximation ratio (of $3+\eps$) with total running time of
  $O((nr + r^{3.5})\lobebe + r^2\ldbe)$, where $n$ is the number of input
  points.

  We also consider the problem of matroid center with $z$ outliers and give a
  one-pass algorithm that outputs a set of $O((r^2+rz)\log(1/\eps)/\eps)$ points
  that contains a $(15+\eps)$-approximate solution.  Our techniques extend to
  knapsack center and knapsack center with $z$ outliers in a straightforward way,
  and we get algorithms that use space linear in the size of a largest feasible
  set (as opposed to quadratic space for matroid center).
\end{abstract}


\newpage

\section{Introduction}
\label{sec:introduction}

In the $k$-center problem, the input is a metric, and we need to select a set of
$k$ centers that minimizes the maximum distance between a point and its nearest
center.  Matroid center is a natural generalization of $k$-center, where, along
with a metric over a set, the input also contains a matroid of rank $r$ over the
same set.  We then need to choose a set of centers that is an independent
set of the matroid that minimizes the maximum distance between a point and its
nearest center.  Then $k$-center is rank-$k$-uniform-matroid center.  Examples of clustering problems where the set of centers needs to form an independent set of a partition matroid arise in content distribution networks (see Hajiaghayi et al.~\cite{Hajiaghayi10} and references therein).  A partition matroid constraint can also be used to enforce fairness conditions such as having $k_M$ centers of type M and $k_W$ centers of type W.  As another example, say the input points lie in a euclidean space, and we are required to output linearly independent centers, then this is the linear-matroid center problem.
Studying a combinatorial optimization problem in the streaming model is
worthwhile not only in its own right, but also because it can lead to discovery of much faster
algorithms\footnote{This is demonstrated by Chakrabarti and
  Kale~\cite{ChakraKaleSubmod} who give streaming algorithms for submodular
  maximization problems that make \emph{only} $2|E|$ total submodular-oracle
  calls ($\tilde{O}(|E|)$ total time) and achieve constant-factor approximations, where $E$ is the ground set.
  On the other hand earlier fastest algorithms were greedy and potentially could make
  $\Omega(|E|^2)$ oracle calls.  Trivially, $|E|$ oracle calls are needed for
  any non-trivial approximation.}.

In the
streaming model, the input points arrive in the stream, and we are interested in
designing algorithms that use space sublinear in the input size.  We
study the matroid center problem in the streaming model.  By a clean reduction
from the \indx problem, we first show that any randomized one-pass streaming
algorithm that computes a better than $\Delta$-approximation for
matroid center must use $\Omega(r^2)$ bits of space, where $\Delta$ is
the aspect ratio of the metric (ratio of the largest distance to the smallest
distance between two points), which can be arbitrarily large.  Since the
Doubling algorithm~\cite{CharikarCFM97} gives an $8$-approximation for
$k$-center in one pass over the stream by storing at most $k$ points, we get a
quadratic separation between matroid center and $k$-center.  We then give a
one-pass algorithm that computes a $(7+\eps)$-approximation using a \emph{stream
  summary} of $O(r^2\lobebe)$ points.  The algorithm maintains an
efficiently-updatable summary, and runs a brute-force step when the end of
the stream is reached.  We can replace the brute-force step by an efficient
algorithm to get a $(17+\eps)$-approximation.  Alternatively, using a second
pass, we can (efficiently) compute a $(3+\eps)$-approximation.  Our algorithms
assume only oracle accesses to the metric and to the matroid.
Aforementioned efficient one-pass and two-pass algorithms have total running
time $O((nr + r^{3.5})\lobebe + r^2\ldbe)$, where $n$ is the number of input
points.

In $k$-center or matroid center, even very few rogue points can wreck up the
solution, which motivates the outlier versions where we can choose up to $z$
points that our solution will not serve.  McCutchen and
Khuller~\cite{McCutchenK08} give a one-pass $(4+\eps)$-approximation algorithm
for $k$-center with $z$ outliers that uses space $O(kz\lobebe)$.  Building on
their ideas, we give a $(15+\eps)$-approximation one-pass algorithm for matroid
center with $z$ outliers, using a brute-force search through the summary as the
last step, and a $(51+\eps)$-approximation algorithm if we want an efficient
implementation in the last step.

To the best of our knowledge, matroid center problems have not been considered
in streaming.  Chen, Li, Liang, and Wang~\cite{Chen2016} give an offline
$3$-approximation algorithm for matroid center and a $7$-approxi\-mation
algorithm for the outlier version; this approximation ratio is improved to $3$
by Harris et al.~\cite{Harris17}.  These algorithms are not easily adaptable to
the streaming setting if we are allowed only one pass, though, our two-pass
algorithm for matroid center may be thought of as running multiple copies of
Chen et al.'s $3$-approximation algorithm.  We mention that optimization
problems over matroid or related constraints have been studied
before in streaming~\cite{ashwin,ChakraKaleSubmod,chekuri15}.

The Doubling algorithm~\cite{CharikarCFM97} gives an $8$-approximation for
$k$-center.  Guha~\cite{Guha09}, using his technique of ``stream-strapping'',
improves this to $2+\eps$.  We use the stream-strapping technique in this paper
to reduce space-usage of our algorithms as well.  Known streaming algorithms for
$k$-center problems do not extend to the matroid center problems.  Indeed, the
gap between the space complexities of $k$-center and matroid center, exhibited
by our lower bound, warrants the need for new ideas.

\subsection*{Techniques}
\label{sec:techniques}

At the heart of many algorithms for $k$-center is Gonzalez's~\cite{Gonzalez85}
furthest point heuristic that gives a $2$-approximation.  It first chooses an
arbitrary point and adds it to the current set $C$ of centers.  Then it chooses
a point that is farthest from $C$ and adds it to $C$.  This is repeated until
$C$ has $k$ centers.  Let $C_E$ be the set of centers returned by this
algorithm, and let $p$ be the point that is farthest from $C_E$.  Then
$d(p,C_E)$ is the cost of the solution, whereas the set $C_E\cup\{p\}$ of size
$k+1$ acts as a certificate that an optimum solution must have cost at least
$d(p,C_E)/2$.  This can be easily implemented in streaming if we are given a
``guess'' $\guess$ of $\opt$, i.e., the cost of an optimum solution.  When we
see a new point $e$ in the stream, we add it to $C$ if $d(e,C) > 2\guess$.
Assuming that we know the aspect ratio $\Delta$, we can do this for
$2\log_{1+\eps}\Delta$ guesses of $\opt$ to get a $(2+\eps)$-approximation as
follows.  Let $R$ be the distance between first two points in the stream.  Then
maintain the set $C$ as described above for guesses
$\guess \in \{R/\Delta, (1+\eps)R/\Delta, (1+\eps)^2R/\Delta, \ldots,
R\Delta\}$.  The stream-strapping technique reduces the number of active guesses
to $O(\lobebe)$.

In extending this to matroid center, the biggest challenge is deciding which
point to make a center. In a solution to $k$-center, if we replace a point by
another point that is very close to it, then the cost can change only slightly,
whereas if we do the same in a solution to matroid center, the solution might
just become infeasible.
Therefore, if we maintain a set $C$ as earlier, it might quickly lose its
independence in the matroid.  The idea is to store, for each of the at most $r$
points $c\in C$, a maximal independent set $I_c$ of points close to $c$; here,
by close we mean close in terms of the guess $\guess$.  This way, we store at
most $r^2+r$ points.  Storing a maximal independent set for each point in $C$ may
seem wasteful, but our lower bound shows that it is necessary.  Our first
algorithmic insight is to show that this idea works for a correct guess.  We
show that if each optimum center $s$ is in the span of an independent set $I_c$
for a $c$ that is close to $s$, then we can recover an independent set of small
cost from the \emph{summary} $\bigcup_{c\in C_E} I_c$.  And as our second
insight, we show how to extend the stream-strapping approach to reduce the
number of active guesses, which helps us reduce the space usage.  These ideas
naturally combine with those of McCutchen and Khuller~\cite{McCutchenK08} and
help us design an algorithm for matroid center with $z$ outliers, but it is
nontrivial to prove that the combination of these ideas works.

\paragraph*{Knapsack center}

In the knapsack center problem, each point $e$ has a non-negative weight $w(e)$,
and the goal is to select a set $C$ of centers that minimizes the maximum
distance between a point and its nearest center subject to the constraint that
$\sum_{c\in C}w(c)\le B$, where $B$ is the \emph{budget}.  The $k$-center
problem is a special case with unit weights and $B=k$.  In the streaming
setting, our algorithms for matroid center and matroid center with outliers can
be extended to get constant approximations using space proportional to the size
of a largest feasible set, i.e., $\max\{|S|: \sum_{e\in S}w(e) \le B\}$.  As
described earlier, we maintain a set $C$ of potential centers using the guess
$\guess$, and for each potential center $c$, we also maintain a smallest
weight point, say $s_c$, in its vicinity.  Then, in the end, the summary
$\{s_c: c\in C\}$ contains a good solution.  This idea works because replacing
a center by a nearby point with a smaller weight does not affect the feasibility
in the knapsack setting (which could destroy independence in the matroid
setting).

\subsection*{Related Work}
\label{sec:related-work}
The $k$-center problem was considered in the '60s~\cite{Hakimi64,Hakimi65}.  It
is NP-hard to achieve a factor of better than $2$~\cite{Hsu79}, and
polynomial-time $2$-approximation algorithms exist~\cite{Gonzalez85,Hochbaum85}.
As mentioned earlier, Chen et al.~\cite{Chen2016} give a $3$-approximation
algorithm for matroid center and a $7$-approximation algorithm for the outlier
version, and this approximation ratio is improved to $3$ by Harris et
al.~\cite{Harris17}.  Motivated by applications in content distribution
networks, the matroid median problem is
considered as well~\cite{Hajiaghayi10,Krishnaswamy11}.  The problem of
$k$-center with outliers was first studied by Charikar et
al.~\cite{CharikarKMN01} who gave a $3$-approximation algorithm.  The
approximation ratio was recently improved to $2$ by Chakrabarty et
al.~\cite{ChakrabartyGK16}. We mention the work of Lattanzi et
al.~\cite{Lattanzi15} that considers hierarchical $k$-center with outliers.

For knapsack center, a
$3$-approximation was given by Hochbaum and Shmoys~\cite{Hochbaum86}.  For the
outlier version of knapsack center, very recently, Chakrabarty and
Negahbani~\cite{chakrabartyN18} gave the first non-trivial approximation (a
$3$-approximation).

\paragraph*{Streaming} Charikar et al.~\cite{Charikar03} and Guha et
al.~\cite{Guha03} consider $k$-median with and without outliers in streaming.
Guha~\cite{Guha09} gives a $(2+\eps)$-approximation one-pass algorithm for
$k$-center that uses $O(k\lobebe)$ space, and McCutchen and
Khuller~\cite{McCutchenK08} give a $(4+\eps)$-approximation one-pass algorithm
for $k$-center with $z$ outliers that uses $O(kz\lobebe)$ space.  The special
cases of $1$-center (or, the minimum enclosing ball problem) and $2$-center in
euclidean spaces have been considered~\cite{Zarrabi-ZadehM09,Kim15,HatamiZ17}
and better approximation ratios than the general $k$-center problem are known in
streaming.  Correlation clustering is studied in streaming by Ahn et
al.~\cite{Ahn15}.  Cohen-Addad et al.~\cite{cohenaddad16} give streaming
algorithms for $k$-center in the sliding windows model, where we want to
maintain a solution for only some number of the most recent points in the
stream.  Guha~\cite{Guha09} also gives a space lower bound of $\Omega(n)$ for
one-pass algorithms that give a better than $2$ approximation for (even the
special case of) $1$-center by a simple reduction from $\indx$, where $n$ is the
number of points.

\paragraph*{$k$-center in different models}
Chan et al.~\cite{Chan18} consider $k$-center in the fully dynamic
adversarial setting, where points can be added or deleted from the input, and
the goal is to always maintain a solution by processing the input updates
quickly.  Malkomes et al.~\cite{Malkomes15} study distributed $k$-center with
outliers.


\subsubsection*{Organization of the Paper}
\label{sec:organization-paper}

We define the model and the problems in~\Cref{sec:prelim}.
\Cref{sec:space-lower-bound} is on the lower bound.  In
\Cref{sec:one-pass-algorithm}, we give our important algorithmic ideas and
discuss our algorithm for matroid center, and then in
\Cref{sec:one-pass-algorithm-ouliers}, we discuss the outlier version.  In
\Cref{sec:handling-guesses}, we give the improved space bounds.


\section{Preliminaries}
\label{sec:prelim}

A matroid $\cM$ is a pair $(E,\cI)$, where $E$ is a finite set and is called the
ground set of the matroid, and $\cI$ is a collection of subsets of $E$ that
satisfies the following \emph{axioms}:
\begin{enumerate}
  \item $\emptyset \in \cI$,
  \item if $J \in \cI$ and $I\subseteq J$, then $I \in \cI$, and
  \item if $I, J \in \cI$ and $|I| < |J|$, then there exists $e \in J\setminus
  I$ such that $I\cup\{e\} \in \cI$.
\end{enumerate}
If a set $A\subseteq E$ is in $\cI$, then it is called an \emph{independent} set
of the matroid $\cM$, otherwise it is called a \emph{dependent set}.  A
singleton dependent set is called a \emph{loop}.  \emph{Rank} of a set $A$,
denoted by $\rank(A)$, is the size of a maximal independent set within $A$; note
that $\rank$ is a well-defined function because of the third axiom, which is
called the \emph{exchange} axiom.  Clearly, for $A\subseteq B$,
$\rank(A)\le\rank(B)$.  Rank of a matroid is the size of a maximal independent
set within $E$.  \emph{Span} of a set $A$, denoted by $\spn(A)$, is the largest
set that contains $A$ and has the same rank as $A$ (it can be shown that such a
set is unique).  We will also use \emph{submodularity} of the rank function,
i.e., for $A,B\subseteq E$,
\begin{equation}
  \label{eq:4}
  \rank(A\cup B) + \rank(A\cap B) \le \rank(A) + \rank(B)\,.  
\end{equation}

A matroid $(E,\cI)$ is a \emph{partition} matroid if there exists a partition
$\{E_1, E_2, \ldots, E_p\}$ of $E$ and nonnegative integers
$\ell_1, \ell_2, \ldots, \ell_p$, such that
$\cI = \{ A \subseteq E : \forall i \in [p], |A\cap E_i| \le \ell_i \}$.  We say that $\ell_i$ is the \emph{capacity} of part $E_i$.  Observe that the rank of the matroid is $\sum_{i = 1}^p \ell_i$.

A metric $d$ over $E$ is a (distance) function $d:E\times E \rightarrow \RR_+$
that satisfies the following properties for all $e_1, e_2, e_3 \in E$:
\begin{enumerate}
  \item $d(e_1,e_2)=0$ if and only if $e_1=e_2$,
  \item $d(e_1,e_2) = d(e_2,e_1)$, and
  \item $d(e_1,e_3) \le d(e_1,e_2) + d(e_2,e_3)$; this property is called the
  \emph{triangle inequality}.
\end{enumerate}
We sometimes call elements in $E$ points.  For a point $e$ and a positive number
$\alpha$, the closed ball of radius $\alpha$ around $e$, denoted by
$\bB(e,\alpha)$, is the set $\{x\in E: d(e,x) \le \alpha\}$.  We overload $d$ by
defining $d(e,A) := \min_{x\in A} d(e,x)$ for $e\in E$ and $A\subseteq E$.  The
aspect ratio $\Delta$ of a metric is the ratio of the largest distance to the
smallest in the metric, i.e., $\max_{x,y}d(x,y)/\min_{x,y}d(x,y)$.

The input for the matroid center problem is a matroid $\cM = (E,\cI)$ of rank
$r$ and a metric $d$ over $E$.  The goal is to output an independent set $S$
such that its cost $\max_{e\in E} d(e,S)$ is minimized.  We are interested in
algorithms that assume oracle (or black-box) accesses to the matroid and the
metric.  The algorithm can ask the matroid oracle whether a set is independent
or not, and it can ask the metric oracle (or distance oracle) what the distance
between given two points is.  In the streaming model, elements of $E$ arrive one
by one, and we want to design an algorithm that uses small (sublinear in the
input) space.  The algorithm can query the oracles only with the elements of
$E$.  If the algorithm queries an oracle with an element not in $E$, then we say
that it \emph{fails}.  A streaming algorithm can only remember a small part of
the input, and the aforementioned restriction disallows plausible learning about
forgotten elements indirectly from oracle calls.  Also, an algorithm cannot just
enumerate elements of $E$ on the fly without looking at the stream, because it
does not know the names of the elements in advance.

The input for matroid center with $z$ outliers is also a matroid
$\cM = (E, \cI)$ and a metric $d$ over $E$, but the goal is to output an
independent set whose cost is computed with respect to $|E| - z$ closest points.
Formally, cost of a set $S$ is
$\min\{\alpha \in \RR_+: |E \setminus (\bigcup_{s \in S}\bB(s,\alpha))| \le
z\}$.

We denote by $\opt$ the cost of an optimum solution of the instance in the
context and by $n$ the number of input points, i.e., $|E|$.


\section{Space Lower Bound for One Pass Matroid Center}
\label{sec:space-lower-bound}
We show that $\Omega(r^2)$ space is required to achieve better than
$\Delta$-approximation for a one-pass algorithm for matroid center.  We reduce
from the communication problem of \indx.  This reduction is based on the simple
reduction for the maximum-matching-size problem: see~\Cref{fig:lb}.  In
$\indx_N$, Alice holds an $N$-bit string and Bob holds an index $I \in [N]$;
Alice sends a message to Bob, who has to determine the bit at position $I$.  It
is known that Alice has to send a message of size at least
$(1-H_2(3/4))N \ge 2N/11$ for Bob to output correctly with a success probability
of $3/4$, where $H_2$ is the binary entropy function.

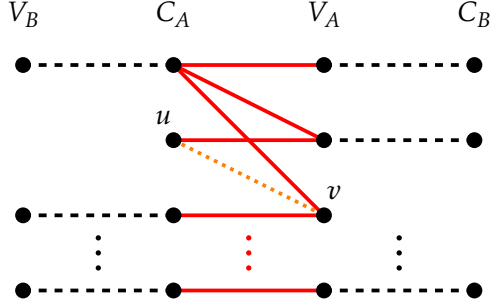
\begin{figure}[!h]
\centering
\begin{tikzpicture}[line width=0.5mm]

\draw [red] (0,0) -- (2,0);
\draw [red] (0,1) -- (2,1);
\draw [red] (0,2) -- (2,2);
\draw [dotted,orange] (0,2) -- (2,1);
\draw [red] (0,3) -- (2,1);
\draw [red] (0,3) -- (2,2);
\draw [red] (0,3) -- (2,3);

\fill [color=red] (1,0.3) circle (1pt);
\fill [color=red] (1,0.5) circle (1pt);
\fill [color=red] (1,0.7) circle (1pt);

\fill  (-1,0.3) circle (1pt);
\fill  (-1,0.5) circle (1pt);
\fill  (-1,0.7) circle (1pt);

\fill  (3,0.3) circle (1pt);
\fill  (3,0.5) circle (1pt);
\fill  (3,0.7) circle (1pt);

\draw [dashed] (2,0) -- (4,0);

\draw [dashed] (2,2) -- (4,2);
\draw [dashed] (2,3) -- (4,3);

\draw [dashed] (0,0) -- (-2,0);
\draw [dashed] (0,1) -- (-2,1);

\draw [dashed] (0,3) -- (-2,3);

\draw (-0.4,2.3) node[anchor=west] {$u$};
\draw (2.4,1.3) node[anchor=east] {$v$};

\fill [color=black] (0,0) circle (3pt);
\fill [color=black] (2,0) circle (3pt);
\fill [color=black] (0,1) circle (3pt);
\fill [color=black] (2,1) circle (3pt);
\fill [color=black] (0,2) circle (3pt);
\fill [color=black] (2,2) circle (3pt);
\fill [color=black] (0,3) circle (3pt);
\fill [color=black] (2,3) circle (3pt);

\fill [color=black] (-2,0) circle (3pt);
\fill [color=black] (4,0) circle (3pt);
\fill [color=black] (-2,1) circle (3pt);
\fill [color=black] (4,2) circle (3pt);
\fill [color=black] (-2,3) circle (3pt);
\fill [color=black] (4,3) circle (3pt);

\draw (0,4) node[anchor=north] {$C_A$};
\draw (-2,4) node[anchor=north] {$V_B$};
\draw (2,4) node[anchor=north] {$V_A$};
\draw (4,4) node[anchor=north] {$C_B$};

\end{tikzpicture}
\caption{ If we have a one-pass streaming algorithm that computes the size of a
  maximum matching of a $k$ vertex bipartite graph using $o(k^2)$ space, then we
  can solve $\indx_N$ using $o(N)$ communication, which would be a
  contradiction.  Alice and Bob agree on a bijection from $[N]$ to the edges of
  a complete bipartite graph $K_{k,k}$ and construct a graph $G$ as follows.  If
  $\ell$th bit is $1$, Alice adds the corresponding edge (shown in solid red).
  If the index corresponds to the edge $\{u,v\}$ (shown as a dotted orange
  edge), Bob adds a new perfect matching between all but vertices $u$ and $v$
  and $2k-2$ new vertices (shown as dashed black edges).  Alice runs the
  matching-size estimation algorithm and sends the memory contents to Bob, who
  continues running it and computes the output.  By design, if the index is $1$,
  then maximum-matching-size is $2k - 1$, otherwise it is $2k - 2$, and an exact
  algorithm can distinguish between the two cases.} \label{fig:lb}
\end{figure}


\subsection*{Reduction from \indx to Partition-Matroid Center}
\label{sec:reduction-from-indx}
We prove the following theorem.
\begin{theorem}
  Any one-pass algorithm for partition-matroid center that outputs a better than
  $\Delta$-approximation with probability at least $3/4$ must use at least
  $r^2/24$ bits of space.
  \label{thm:lb}
\end{theorem}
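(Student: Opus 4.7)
The plan is to reduce from $\indx_N$ with $N = r^2$, following the spirit of the matching reduction in \Cref{fig:lb}. Alice holds a bit matrix $B \in \{0,1\}^{[r] \times [r]}$ and Bob holds a query $(i^*, j^*) \in [r] \times [r]$, wanting to determine $B_{i^* j^*}$.

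I would build a partition-matroid-center instance whose optimum cost differs between the YES case ($B_{i^* j^*} = 1$) and the NO case ($B_{i^* j^*} = 0$) by a factor of $\Delta$, so that an algorithm achieving strictly better than $\Delta$-approximation reveals $B_{i^* j^*}$ through the cost of its output. First, Alice streams a candidate center $e_{ij}$ for each entry $(i, j)$ with $B_{ij} = 1$; these points populate a partition matroid whose $r$ parts correspond to rows of the matrix, each with capacity $1$, so the rank is exactly $r$. The metric on Alice's stream is chosen with small within-row distances and large cross-row distances (scaled by $\Delta$). Bob then streams a single query point $q$ with $d(q, e_{i^* j^*}) = 1$ and the distance from $q$ to every other streamed point set as large as the triangle inequality permits.

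Next, I would analyze the two cases. In YES, $e_{i^* j^*}$ is in the universe; including it in the independent set together with any one element per other row that contains a $1$-bit covers $q$ at distance $1$ and each other streamed $e_{ij}$ at its small within-row distance, giving a small constant optimum. In NO, $e_{i^* j^*}$ is absent from the universe, so the triangle-inequality constraint it would otherwise impose on $d(q, \cdot)$ vanishes and $q$ can sit at distance $\Delta$ from every streamed element, forcing the optimum to $\Delta$. Setting the remaining distances so the aspect ratio of the resulting metric is $\Delta$ completes the construction.

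To conclude the space lower bound, I would run the standard Alice--Bob simulation: Alice runs the streaming algorithm on her stream, responding to its oracle queries from her knowledge of $B$ and of the within-row/cross-row distances (which do not depend on $(i^*, j^*)$), and transmits the algorithm's memory to Bob. Bob resumes the simulation by streaming $q$ and answering the remaining oracle calls, which involve $q$ and are determined by $(i^*, j^*)$ together with what Bob can see of the transmitted state. Since strictly better than $\Delta$-approximation forces the output cost to be $< \Delta$ in YES and $\geq \Delta$ in NO, Bob reads off $B_{i^* j^*}$ just from the cost of $S$. Combining this with the randomized one-way lower bound $(1 - H_2(3/4)) N \geq 2N/11$ for $\indx_N$ at success probability $\geq 3/4$, and taking $N = r^2$, gives the stated bound of at least $r^2 / 24$ bits.

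The main obstacle is ensuring the metric is simultaneously valid (triangle-inequality-satisfying) and achieves the needed $\Delta$-factor separation. The triangle inequality through $e_{i^* j^*}$ in the YES case constrains $d(q, e_{i^* j})$ for the other row-$i^*$ elements, so the within-row and cross-row distances must be chosen carefully; the construction relies crucially on the fact that in the NO case $e_{i^* j^*}$ is simply not in the ground set, removing the triangle-inequality restriction that would otherwise keep $d(q, \cdot)$ small and allowing $d(q, \cdot) = \Delta$ everywhere in the stream. Verifying that both resulting metrics are valid on their respective ground sets and that Bob's simulation does not leak information that would let the algorithm shortcut learning $B_{i^* j^*}$ with sub-$r^2$ memory is the bookkeeping that completes the argument.
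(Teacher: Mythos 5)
There is a genuine gap here, and it is fatal to the reduction as described: you encode the hidden bit in the metric itself, which makes Bob's oracle simulation impossible. Concretely, suppose $B_{i^*j'}=1$ for some $j'\neq j^*$. Both the YES and the NO instance contain the points $e_{i^*j'}$ and $q$, yet you assign $d(q,e_{i^*j'})\le 1+d(e_{i^*j^*},e_{i^*j'})$ (small) in the YES case --- forced by the triangle inequality through $e_{i^*j^*}$ --- and $d(q,e_{i^*j'})=\Delta$ in the NO case. When the algorithm, now running on Bob's side, queries the distance oracle on the pair $(q,e_{i^*j'})$, Bob must answer, and the correct answer depends on $B_{i^*j^*}$ --- exactly the bit he is trying to learn. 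So the instance is not a function of the two players' inputs that each can evaluate locally, and the communication protocol cannot be executed. The same tension blocks the obvious repairs: making $d(q,e_{i^*j})$ small for all of row $i^*$ in both cases gives the NO instance a cheap solution whenever row $i^*$ contains any $1$-bit, while making all of them $\Delta$ violates the triangle inequality in the YES case (or forces within-row distances up to roughly $\Delta$, destroying the gap). A smaller omission: you never forbid $q$ itself from being a center, in which case even the NO-case optimum can be small.

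The paper's construction is engineered precisely to avoid this trap. Its metric is determined solely by cluster membership (distance $1$ within a cluster, $\Delta$ across clusters), and the cluster and matroid part of every point are computable from that point's identity alone, independently of Alice's other bits, so both players can always simulate both oracles. The $1$-versus-$\Delta$ gap is then produced not by the metric but by partition-matroid feasibility: Bob adds mandatory-to-cover points $p(u)$ placed in a capacity-$0$ part, together with lone points sitting in their own clusters but sharing capacity-$1$ parts with Alice's points, so that in the NO case covering some $p(u)$ necessarily exhausts a part and leaves a lone point uncovered at distance $\Delta$. If you want to keep your matrix framing, you need a combinatorial blocking mechanism of this kind rather than a bit-dependent metric.
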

\begin{proof}

  Assume, towards a contradiction, that there exists a one-pass algorithm for
  partition-matroid center that outputs a better than $\Delta$-approximation
  using at most $r^2/24$ bits of space.  Then we use it to solve the \indx
  problem.
  Given an input for \indx, Alice and Bob first construct a bipartite graph $G$
  just as described in~\Cref{fig:lb}.  Then they construct a partition-matroid
  center instance based on $G$.  Before formalizing the construction, we
  emphasize that the metric does not correspond to the graph metric given by
  $G$, but each edge in $G$ will become a point in the metric.  The vertex set
  they use is union of four sets $C_A, V_A$, each of size $\rlb$, and
  $C_B, V_B$, each of size $\rlb - 1$.  Alice constructs a subset of edges
  between $C_A$ and $V_A$ based on her $N$-bit string, so we use $N = \rlb^2$.
  We say that these edges are owned by Alice.  If the index that Bob holds
  corresponds to an edge $\{u,v\}$ with $u\in C_A$ and $v\in V_A$, he adds a
  perfect matching $M$ between $C_A\setminus \{u\}$ and $V_B$ and a perfect
  matching $M'$ between $V_A\setminus\{v\}$ and $C_B$.  The edges in $M \cup M'$
  are owned by Bob.

  To each $u \in C_A \cup C_B$, we associate a cluster $C(u)$ of at most
  $\rlb$ points in the metric that we will construct, and to each
  $v \in V_A \cup V_B$, we associate a part $P(v)$ in the partition matroid
  with capacity $1$.  Thus, rank of the matroid  $r = 2\rlb - 1$ because $|V_A \cup V_B| = 2\rlb - 1$.  By our design,
  no two clusters will intersect and no two parts will intersect, i.e.,
  $C(u)\cap C(u')=\emptyset$ for $u \neq u'$, and $P(v)\cap P(v')=\emptyset$ for
  $v \neq v'$.  The metric is as follows.  Any two points in the same cluster
  are a unit distance apart and any two points in two different clusters
  are distance $\Delta$ apart.  This trivially forms a metric, because the
  clusters are disjoint.  For each $u \in C_A$, Bob adds a point $p(u)$ in the
  cluster $C(u)$, so that it is nonempty.  Add $P' := \{p(u) : u \in C_A\}$ as a
  part in the partition matroid with capacity $0$, so no $p(u)$ can be a center.
  For each edge $\{u,v\}$ in $G$ with $u \in C_A \cup C_B$ and
  $v \in V_A \cup V_B$, whoever owns that edge adds a point $p(\{u,v\})$ that
  goes in cluster $C(u)$ and part $P(v)$.  Now, Alice runs the partition-matroid
  center algorithm on the points she constructed.  She can do this because she
  knows the metric and the part identity of each point, so she can simulate the
  distance and matroid oracles.  Note that if the algorithm expects an explicit
  description of the partition matroid, Alice can also send along with each
  point the identity of the part to which it belongs and the capacity of the
  part (which is always $1$ for her points).  She then sends the memory contents
  to Bob, who continues running the algorithm on his points and computes the
  cost of the output.  We note that Bob can also simulate the distance and
  matroid oracles.  Any point he does not own corresponds to a red edge, and
  using the identity of that edge, he can figure out the part and cluster to
  which the point belongs.

  Now we prove the correctness of the reduction.  Say Bob holds the index
  corresponding to the edge $\{u,v\}$, where $u \in C_A$ and $v \in V_A$.  If
  the index is $1$, then $\{u,v\}$ exists in the graph, then opening centers at
  points corresponding to edges in $M \cup M' \cup \{u,v\}$ satisfies the
  partition matroid constraint and also for each $u \in C_A \cup C_B$, we have a
  center opened in $C(u)$, so the cost is $1$.  Let the index be $0$. We want to
  show that there is no independent set of cost less than $\Delta$.  For a
  contradiction, assume there is such an independent set.  Now, recall that
  $p(u)$ cannot be a center, so it has to be served by some center in $C(u)$,
  otherwise the cost will be $\Delta$.  Let $p(u)$ be served by some
  $p(\{u,v'\})$ for $v'\neq v$.  Then $p(\{v',w\})$, where $\{v',w\} \in M'$,
  cannot be a center, because both $p(\{u,v'\}$ and $p(\{v',w\})$ belong to the
  part $P(v')$ with capacity $1$.  The point $p(\{v',w\})$ is the lone point in
  its cluster, and since it cannot be a center, the cost is $\Delta$.  If the
  algorithm is better than $\Delta$-approximation, then Bob can distinguish
  between these two cases, and thus, solve $\indx_N$ using communication at most
  $r^2/24 \le 4\rlb^2/24 = N/6$ bits, which is a contradiction.
\end{proof}

After seeing the lower bound, a remark is in order.  The difficulty in designing
an algorithm is as follows.  Even if we know that one center must lie in a ball
of small radius centered at a known point, we do not know which points in that
ball to store so as to recover an independent set of the matroid.


\section{Matroid Center}
\label{sec:one-pass-algorithm}

Our algorithm for matroid center can be seen as a generalization of the
algorithm by Hochbaum and Shmoys for $k$-center~\cite{Hochbaum85} adapted to the
streaming setting.  We first quickly describe the algorithm for $k$-center.
Given an upper bound $\guess$ on the optimum cost, the algorithm
stores a set $C$ of up to $k$ \emph{pivots} such that distance between any two
pivots is more than $2\guess$.  When the algorithm sees a new point $e$ in the
stream such that distance between $e$ and any pivot is more than $2\guess$, it
makes $e$ a pivot.  The size of $C$ cannot exceed $k$ in this way, because
$\guess$ is an upper bound on the optimum cost, so no two pivots are served by a
single optimum center.  Also, any other point is within distance $2\guess$ of
some pivot.  In the end, the algorithm designates all pivots as centers.  In
generalizing this to matroid center, one obvious issue is that the set $C$ of
pivots constructed as above may not be an independent set for the given general
matroid\footnote{This is precisely why we call points in $C$ ``pivots'' rather
  than ``centers'' in this paper.}.  What we do know is that there has to be an
optimum center within distance $\guess$ of each pivot.  Formally, for $c\in C$,
there exists $s_c$ such that $d(c,s_c)\le \guess$ and $\{s_c : c \in C\}$ is
an independent set.
For each pivot $c$, we maintain an independent set $I_c$ of nearby points.  We
prove that it is enough to have each $s_c$ be spanned by some $I_{c'}$ to get a
good solution within $\bigcup_{c\in C} I_c$.  \Cref{alg:1pmc} gives a formal
description.

Note that in~\Cref{alg:1pmc} if we try to add $e$ to $I_c$ under the condition
that $d(e,c)\le\guess$, then we may miss spanning some $s_c$.  This will happen
if $d(s_c,C) \in (\guess,2\guess]$, where $C$ is the set of pivots when $s_c$
arrived. Using the condition $d(e,c)\le\guess$ works if each $s_c$ arrives after
$c$ though (we use it in the second pass of our two-pass algorithm).

\begin{algorithm}[!h]
  \begin{algorithmic}[1]
    \Function{MatroidCenter}{$\guess$,flag}
    


    \State Initialize pivot-set $C\gets \emptyset$.

    \For{each point $e$ in the stream}

    \If{there is a pivot $c\in C$ such that $d(e,c)\le2 \guess$ (pick arbitrary
      such $c$)}

    \If{$I_c\cup\{e\}$ is independent}  \label{line:scaddif}

    \State $I_c \gets I_c\cup\{e\}$.  \label{line:scadd}

    \EndIf

    \ElsIf{$|C| = r$} \label{line:sizecheck} \Comment{We cannot have more pivots
      than the rank.}

    \State Abort. \Comment{Because $C\cup \{e\}$ acts as a certificate that the
      guess is incorrect.}

    \Else

    \State $C \gets C\cup\{e\}$.  \Comment{Make $e$ a pivot.}

    \State If $\{e\}$ is not a loop, $I_e \gets \{e\}$, else
    $I_e\gets\emptyset$.  \label{line:scaddinit}

    \EndIf

    \EndFor

    \If{flag = ``brute force''}

    \State Find an independent set $C'_B$ in $\bigcup_{c\in C}I_c$ such that
    $d(c,C'_B)\le 5\guess$ for $c \in C$.

    \State If such $C'_B$ does not exist, then abort, else return
    $C'_B$. \label{line:bruteforcecheck}

    \EndIf

    \State \Return \Call{EfficientMatroidCenter}{$5\guess$, $C$,
      $(I_c)_{c\in C}$, $\cM$} (given in~\Cref{alg:effchen}
    in~\Cref{sec:effic-appr-algor}).

    \EndFunction
    \caption{One pass algorithm for matroid center.}
    \label{alg:1pmc}
  \end{algorithmic}
\end{algorithm}

First, we quickly bound the space usage.
\begin{lemma}\label{lem:mcspace}
  In any call to \Call{MatroidCenter}{}, we store at most $r^2+r$ points.
\end{lemma}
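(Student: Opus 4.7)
The proof plan is a direct counting argument; there is essentially no obstacle to overcome, so I will keep it brief and organize it around two bounds.

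First I would bound the size of the pivot set $C$. The only line that enlarges $C$ is the assignment $C \gets C \cup \{e\}$, and this line is executed only inside an \texttt{else} branch whose preceding \texttt{elseif} aborts the algorithm whenever $|C| = r$ at the moment a new pivot is about to be added. Hence the invariant $|C| \le r$ is maintained throughout any non-aborted execution.

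Next I would bound each $|I_c|$. For a pivot $c$, the set $I_c$ is initialized on \cref{line:scaddinit} either to $\emptyset$ (if $\{c\}$ is a loop) or to the singleton $\{c\}$ (which is independent since $\{c\}$ is not a loop). It is only ever enlarged on \cref{line:scadd}, and this is gated by the test on \cref{line:scaddif} that $I_c \cup \{e\}$ be independent. Therefore $I_c$ remains an independent set of $\cM$ at all times, and since $\cM$ has rank $r$, we have $|I_c| \le r$.

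Finally I would combine the two bounds: the points the algorithm stores are exactly those in $C$ together with those in $\bigcup_{c\in C} I_c$, so the total count is at most
\[
  |C| + \sum_{c\in C} |I_c| \;\le\; r + r\cdot r \;=\; r^2 + r,
\]
as claimed. (One can in fact save $|C|$ many points by observing that $c \in I_c$ whenever $\{c\}$ is not a loop, but the stated bound suffices.)
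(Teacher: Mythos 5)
Your proof is correct and follows exactly the paper's argument: the abort check caps $|C|$ at $r$, each $I_c$ stays independent and hence has size at most $r$, and summing gives $r^2+r$. No issues.
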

\begin{proof}
  The check on~\Cref{line:sizecheck} ensures that $|C|\le r$.  For each pivot
  $c$, the size of its independent set $I_c$ is at most $r$, hence the total number of
  points stored is at most $r^2+r$.
\end{proof}
Consider a call to \Call{MatroidCenter}{} with $\guess\ge\opt$.  Let $C_E$ be
the set of pivots at the end of the stream.  As alluded to earlier, for an
optimum independent set $I^*$, the following holds: for each $c\in C_E$, there
exists $s_c\in I^*$ such that $d(c,s_c)\le \guess$, and also $s_c\neq s_{c'}$
for $c\neq c'$, because $d(c,c') > 2\guess$.  Now, we prove the following
structural lemma that we need later.

\begin{lemma}
  Let $I_1,\ldots,I_t$ and $S=\{s_1,\ldots,s_u\}$ be independent sets of a
  matroid such that there is an onto function $f:[u]\rightarrow [t]$ with the
  property that $s_i$ is in the span of $I_{f(i)}$ for $i\in[u]$.  Then there
  exists an independent set $B$ such that $|B\cap I_j|\ge 1$ for $j\in[t]$.
  \label{lem:matstruct}
\end{lemma}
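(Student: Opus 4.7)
The plan is to deduce \Cref{lem:matstruct} from Rado's theorem, the matroid analogue of Hall's marriage theorem: for any matroid and any subsets $A_1,\dots,A_t$ of its ground set, there exist distinct $b_j\in A_j$ with $\{b_1,\dots,b_t\}$ independent if and only if $\rank\bigl(\bigcup_{j\in J}A_j\bigr)\ge |J|$ for every $J\subseteq [t]$. Applied with $A_j = I_j$, this immediately yields the desired $B := \{b_1,\dots,b_t\}$: it is independent, and $b_j\in I_j$ forces $|B\cap I_j|\ge 1$ for every $j\in[t]$.

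All that remains is to verify Rado's condition for the family $(I_1,\dots,I_t)$, and this is precisely where surjectivity of $f$ and the span hypothesis get used. First I would invoke surjectivity to pick, for each $j\in[t]$, some $i_j\in f^{-1}(j)$, and form $S' := \{s_{i_1},\dots,s_{i_t}\}\subseteq S$; since the indices $i_j$ are pairwise distinct (they map to distinct $j$'s under $f$) and $S$ is independent, $S'$ is independent of size exactly $t$. Fix any $J\subseteq [t]$ and set $S'_J := \{s_{i_j} : j\in J\}$, which is an independent set of size $|J|$. By the hypothesis $s_{i_j}\in\spn(I_j)$ together with monotonicity of $\spn$, we have $s_{i_j}\in \spn(I_j)\subseteq \spn\bigl(\bigcup_{j\in J}I_j\bigr)$ for each $j\in J$, so $S'_J\subseteq \spn\bigl(\bigcup_{j\in J}I_j\bigr)$. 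Therefore $\rank\bigl(\bigcup_{j\in J}I_j\bigr)\ge |S'_J| = |J|$, which is exactly Rado's hypothesis.

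I do not expect a serious obstacle: the only conceptual step is noticing that the per-index span assumption, combined with monotonicity of $\spn$, upgrades automatically to the collective rank lower bound required by Rado. If a self-contained argument is preferred over citing Rado, I would instead induct on $t$, keeping $b_t\in I_t$ (which is non-empty because $s_{i_t}$ is not a loop and sits in $\spn(I_t)$), then contract $b_t$ and apply the inductive hypothesis to $(I_1,\dots,I_{t-1})$ with the representatives $s_{i_1},\dots,s_{i_{t-1}}$; the only mildly delicate check is that the span condition $s_{i_j}\in\spn(I_j)$ passes to the contracted matroid, which follows from standard rank-function identities for contractions.
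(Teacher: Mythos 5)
Your proof is correct, but it takes a genuinely different route from the paper. You reduce the statement to Rado's theorem: you use surjectivity of $f$ to extract a size-$t$ independent subset $S'=\{s_{i_1},\ldots,s_{i_t}\}$ with $s_{i_j}\in\spn(I_j)$, and then for each $J\subseteq[t]$ the containment $S'_J\subseteq\spn\bigl(\bigcup_{j\in J}I_j\bigr)$ gives $\rank\bigl(\bigcup_{j\in J}I_j\bigr)\ge|J|$, which is exactly Rado's condition; the resulting independent transversal $\{b_1,\ldots,b_t\}$ with $b_j\in I_j$ is the desired $B$. (This also silently handles the degenerate case $I_j=\emptyset$, since the singleton instance of the rank condition forces each $I_j$ to be nonempty.) The paper instead gives a self-contained exchange argument: starting from $S_0=S$, it repeatedly swaps $s_\ell$ for an element of $I_{f(\ell)}$, using submodularity of the rank function to show that deleting $s_\ell\in\spn(I_{f(\ell)})$ from $S_{\ell-1}\cup I_{f(\ell)}$ does not drop the rank, and then invoking the exchange axiom to find the replacement. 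Your approach buys brevity and conceptual clarity at the cost of importing a classical but nontrivial theorem; the paper's approach stays within the axioms and the submodularity inequality it already states in the preliminaries, and incidentally produces a set $B$ of size $u$ rather than $t$ (both satisfy the lemma). Your sketched inductive alternative via contraction is plausible but not fully worked out --- the point you flag (transferring the span condition, and the fact that $I_j\setminus\{b_t\}$ need not stay independent after contracting $b_t\in\spn(I_j)$) genuinely requires care --- but since the Rado-based argument is complete, this does not affect correctness.
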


\begin{proof}
  For each $\ell \in \{0, 1, \ldots, u\}$, we construct an independent set
  $S_\ell$ such that $|S_\ell| = u$, $|S_\ell\cap I_{f(j)}| \ge 1$ for
  $j\le \ell$, and $s_{\ell+1},\ldots,s_u \in S_\ell$, then $S_u$ is our desired
  set $B$.  Start with $S_0 = S$, and assume that we have constructed
  $S_0, S_1,\ldots,S_{\ell -1}$.  If $s_\ell\in I_{f(\ell)}$, we are done, so
  let $s_\ell\notin I_{f(\ell)}$, then we claim that
  $\rank((S_{\ell -1}\setminus\{s_\ell\})\cup I_{f(\ell)}) \ge u$.  To see this,
  observe that $\rank(S_{\ell -1}) = u$, so by monotonicity of the rank function,
  $\rank(S_{\ell -1}\cup I_{f(\ell)}) \ge u$, but
  $s_\ell \in \spn(I_{f(\ell)})$, so removing $s_\ell$ from
  $S_{\ell-1}\cup I_{f(\ell)}$ would not reduce its rank.  We now give a formal
  argument for completeness.
  We have
  $(I_{f(\ell)}\cup\{s_\ell\})\cup((S_{\ell -1}\setminus\{s_\ell\})\cup
  I_{f(\ell)})=S_{\ell -1}\cup I_{f(\ell)}$, and
  $(I_{f(\ell)}\cup\{s_\ell\})\cap((S_{\ell -1}\setminus\{s_\ell\})\cup
  I_{f(\ell)})=I_{f(\ell)}$. By submodularity of the rank function
  (see~\eqref{eq:4} in~\Cref{sec:prelim}), we have
  \[
    \rank(S_{\ell -1}\cup I_{f(\ell)}) +
    \rank(I_{f(\ell)})\leq\rank(I_{f(\ell)}\cup\{s_\ell\}) + \rank((S_{\ell
      -1}\setminus\{s_\ell\})\cup I_{f(\ell)})\,.
  \]
  Let $q=\rank(I_{f(\ell)})$.  Since $s_\ell \in \spn(I_{f(\ell)})$, we have
  $\rank(I_{f(\ell)}\cup\{s_\ell\})=q$ and the above inequality gives
  \[
    u+q\leq q+\rank((S_{\ell -1}\setminus\{s_\ell\})\cup I_{f(\ell)})\,,
  \]
  which proves the claim.
  Now,
  $\rank(S_{\ell -1}\setminus\{s_\ell\})=u-1<\rank((S_{\ell
    -1}\setminus\{s_\ell\})\cup I_{f(\ell)})$, therefore there exists
  $a\in I_{f(\ell)}$ such that
  $S_\ell := (S_{\ell -1}\setminus\{s_\ell\})\cup\{a\}$ is independent by the
  exchange axiom.
\end{proof}

\begin{lemma}[Small stream summary for matroid center]
\label{lem:existence}
  Consider a call to \Call{MatroidCenter}{} with $\guess\ge\opt$.  Then there
  exists an independent set $B\subseteq \bigcup_{c\in C_E}I_c$ such that
  $d(e,B) \le 7\guess$ for any point $e$ and $d(c,B)\le 5\guess$ for any pivot
  $c\in C_E$.
\end{lemma}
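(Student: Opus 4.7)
The plan is to fix an optimum independent set $I^{\star}$, match each pivot in $C_E$ to a ``close'' optimum center, verify that every such optimum center is spanned by the $I_{c'}$ of some nearby pivot, and then invoke the structural Lemma~\ref{lem:matstruct} to pull out the desired set $B$.

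First I would set up the matching. Because $\guess \ge \opt$, every pivot $c \in C_E$ has some $s_c \in I^{\star}$ with $d(c,s_c) \le \opt \le \guess$. The pivot-selection rule guarantees $d(c,c') > 2\guess$ for distinct $c,c' \in C_E$, so the map $c \mapsto s_c$ must be injective and $S := \{s_c : c \in C_E\}$ is a $|C_E|$-element subset of $I^{\star}$, hence independent. I also want to observe, for the abort check, that if the algorithm had halted on~\Cref{line:sizecheck} under guess $\guess$, the offending pivot set together with the new point would be $r+1$ points pairwise more than $2\guess$ apart, contradicting $\guess \ge \opt$ by the usual $k$-center packing argument; thus the algorithm ran to completion.

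Next I would define a ``home'' function $f : C_E \to C_E$ by examining the instant $s_c$ arrives in the stream. If no existing pivot lies within $2\guess$ of $s_c$, then $s_c$ is promoted to a pivot; in that case set $f(c) := s_c$, and since $s_c$ is not a loop (it belongs to $I^{\star}$) line~\ref{line:scaddinit} puts $s_c \in I_{s_c}$, trivially giving $s_c \in \spn(I_{f(c)})$ and $d(s_c,f(c))=0$. Otherwise the algorithm picks some pivot $c'$ with $d(s_c,c') \le 2\guess$; set $f(c) := c'$ and split: either~\Cref{line:scadd} adds $s_c$ to $I_{c'}$ and $s_c \in I_{f(c)} \subseteq \spn(I_{f(c)})$, or else the independence test on~\Cref{line:scaddif} fails, which by definition means $s_c \in \spn(I_{c'}) = \spn(I_{f(c)})$. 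Either way $s_c \in \spn(I_{f(c)})$ and $d(s_c,f(c)) \le 2\guess$. Note that $f(c) \in C_E$ because once a point becomes a pivot it stays a pivot for the remainder of the stream.

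Now I would apply Lemma~\ref{lem:matstruct}. Let $T := f(C_E) \subseteq C_E$; then $f$ is onto $T$ by definition, $S$ is independent of size $|C_E|$, each $I_{c'}$ ($c' \in T$) is independent, and $s_c \in \spn(I_{f(c)})$ for every $c \in C_E$. The lemma yields an independent set $B \subseteq \bigcup_{c' \in T} I_{c'} \subseteq \bigcup_{c \in C_E} I_c$ with $|B \cap I_{c'}| \ge 1$ for every $c' \in T$. Two triangle-inequality applications finish the proof. For $c \in C_E$, pick any $b \in B \cap I_{f(c)}$; since points added to $I_{f(c)}$ are within distance $2\guess$ of $f(c)$,
\[
  d(c,B) \le d(c,b) \le d(c,s_c) + d(s_c,f(c)) + d(f(c),b) \le \guess + 2\guess + 2\guess = 5\guess\,.
\]
For an arbitrary point $e$, the pivot rule forces some $c \in C_E$ with $d(e,c) \le 2\guess$ (otherwise $e$ would itself have become a pivot, using the no-abort observation above), and then $d(e,B) \le d(e,c) + d(c,B) \le 2\guess + 5\guess = 7\guess$.

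The main obstacle is the spanning step in the second paragraph: one has to be careful that the ``dependent'' branch of the algorithm, which silently drops $s_c$, still certifies $s_c \in \spn(I_{f(c)})$, and that $f(c)$ is indeed a member of the \emph{final} pivot set rather than a transient one. Both points are handled by the monotonic growth of $C$ together with the definition of span, and once they are nailed down the rest is bookkeeping around Lemma~\ref{lem:matstruct} and the triangle inequality.
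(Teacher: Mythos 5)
Your proof is correct and follows essentially the same route as the paper's: your map $f(c)$ is exactly the paper's $a(s_c)$ (the pivot whose independent set the algorithm tried to absorb $s_c$ into), the spanning observation for the failed independence test is the same, and the invocation of Lemma~\ref{lem:matstruct} followed by the two triangle-inequality chains matches the paper's argument. The only cosmetic difference is that you apply Lemma~\ref{lem:matstruct} to the subset $\{s_c : c \in C_E\}$ of $I^{\star}$ rather than to all of $I^{\star}$, which is equally valid.
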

\begin{proof}
  For $c \in C_E$, denote by $s_c$ the optimum center that serves it, so
  $d(c,s_c) \le \guess$.  Let $c'\in C_E$ be such that we tried to add $s_c$ to
  $I_{c'}$ either on~\Cref{line:scadd} or on~\Cref{line:scaddinit}; note that
  $c'$ may not be the same as $c$ if we added it on~\Cref{line:scadd}.  For an
  $x \in I^*$, let $a(x)\in C_E$ denote the pivot whose independent set
  $I_{a(x)}$ we tried to add $x$ to.  Either we succeeded, in which case
  $x \in I_{a(x)}$, or we failed, in which case $x\in\spn(I_{a(x)})$.  In any
  case, by~\Cref{lem:matstruct}, for $\mathcal{A} := \{I_{a(x)} : x \in I^*\}$
  there exists an independent set $B$ such that $|I \cap B| \ge 1$ for all
  $I\in\mathcal{A}$.

  Now, we will bound the cost of $B$.  See~\Cref{fig:mccost}.  Consider any
  point $e$ in the stream.  Let
  \begin{itemize}
    \item $c(e)\in C_E$ be such that $d(e,c(e))\le 2\guess$,
    \item $s_{c(e)}$ be the optimum center that serves $c(e)$, so
    $d(c(e),s_{c(e)}) \le \guess$,
    \item $a(s_{c(e)})\in C_E$ be the pivot whose independent set we tried to
    add $s_{c(e)}$ to, so $d(s_{c(e)},a(s_{c(e)})) \le 2\guess$,
    \item $c'(e)$ be an arbitrary point in $I_{a(s_{c(e)})}\cap B$, so
    $d(a(s_{c(e)}),c'(e)) \le 2\guess$ because $c'(e)\in I_{a(s_{c(e)})}$.
  \end{itemize}
  Then by triangle inequality,
  \[
    d(e,B) \le d(e,c'(e)) \le d(e,c(e)) + d(c(e),s_{c(e)}) +
    d(s_{c(e)},a(s_{c(e)})) + d(a(s_{c(e)}),c'(e)) \le 2\guess + \guess +
    2\guess + 2\guess = 7\guess\,,
  \]
  which proves the first part of the lemma.

  For any $c\in C_E$, we can bound $d(c,B)$ in a similar way.  Let $s_c$ be the
  optimum center that serves $c$, and similarly define $a(s_c)$ to be the pivot
  such that $d(s_c,a(s_c)) \le 2\guess$.  Also, let $c'$ be the point in $B$
  such that $d(a_{s_c},c') \le 2\guess$.  This gives that
  $d(c,B) \le d(c,c') \le 5\guess$.
\end{proof}


\begin{figure}[!h]
\centering
\begin{tikzpicture}

\draw (0,0) circle (2cm);
\draw[fill=white,draw=none] (3,0) circle (2cm);
\draw (3,0) circle (2cm);

\fill  (0,0) circle (2pt);
\fill  (3,0) circle (2pt);

\fill  (-2,0) circle (2pt);
\fill  (1,0) circle (2pt);
\fill  (5,0) circle (2pt);

\draw[-{Latex[length=3mm, width=2mm]}] (-2,0) -- (0,0);
\draw[-{Latex[length=3mm, width=2mm]}] (0,0) -- (-2,0);

\draw[-{Latex[length=3mm, width=2mm]}] (1,0) -- (0,0);
\draw[-{Latex[length=3mm, width=2mm]}] (0,0) -- (1,0);

\draw[-{Latex[length=3mm, width=2mm]}] (1,0) -- (3,0);
\draw[-{Latex[length=3mm, width=2mm]}] (3,0) -- (1,0);

\draw[-{Latex[length=3mm, width=2mm]}] (5,0) -- (3,0);
\draw[-{Latex[length=3mm, width=2mm]}] (3,0) -- (5,0);

\draw (-2,0) node[anchor=north east] {$e$};

\draw (-1,0) node[anchor=south] {$2\guess$};

\draw (0,0) node[anchor=north] {$c(e)$};

\draw (0.5,0) node[anchor=south] {$\guess$};

\draw (1,0) node[anchor=south west] {$s_{c(e)}$};

\draw (2,0) node[anchor=north] {$2\guess$};

\draw (3,0) node[anchor=north] {$a(s_{c(e)})$};

\draw (4,0) node[anchor=south] {$2\guess$};

\draw (5,0) node[anchor=north west] {$c'(e)$};

\end{tikzpicture}
\caption{To see how to bound the cost of the independent set $B$, let $e$ be any
  point in the stream, $c(e)\in C_E$ be the pivot close to $e$, \,\, $s_{c(e)}$
  be the optimum center that covers $c(e)$, \,\, $a(s_{c(e)})\in C_E$ be the
  pivot close to $s_{c(e)}$, and $c'(e)$ be a point in $B$ that covers
  $a(s_{c(e)})$.}

\label{fig:mccost}
\end{figure}

 Before proving our main theorem, we need the following
guarantee on the efficient offline $3$-approxi\-mation algorithm denoted by
\Call{EfficientMatroidCenter}{}.  This algorithm is based on the offline
algorithm for matroid center by Chen et al.~\cite{Chen2016}.  We give it as
input $\offguess = 5\guess$, the set $C_E$ of pivots, their independent sets
$(I_c)_{c\in C_E}$, and the underlying matroid $\cM$ with the promise on the
input that there is an independent set $B\subseteq \bigcup_{c\in C_E} I_c$ such
that for $c\in C_E$, it holds that $d(c,B)\le 5\guess = \offguess$.
\begin{theorem}
  If \Call{EfficientMatroidCenter}{} does not fail, then it outputs a set $C'$
  such that $d(c,C')\le 3\offguess$ for each $c\in C_E$.  If the input
  promise holds, then \Call{EfficientMatroidCenter}{} does not fail.
  \label{thm:chen}
\end{theorem}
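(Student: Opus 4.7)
The plan is to realize \Call{EfficientMatroidCenter}{} as the matroid-intersection-based offline algorithm of Chen et al.~\cite{Chen2016}, applied directly to the stream summary $U := \bigcup_{c\in C_E} I_c$. First I would scan $C_E$ in arbitrary order and greedily construct a sub-pivot set $C''\subseteq C_E$ by adding $c$ to $C''$ whenever $d(c,C'')>2\offguess$; by construction, every $c\in C_E$ is within $2\offguess$ of some element of $C''$, while the points of $C''$ are pairwise more than $2\offguess$ apart. Next, for each $c''\in C''$ define $B_{c''}:=U\cap\bB(c'',\offguess)$; these sets are pairwise disjoint, since a common element $b\in B_{c_1''}\cap B_{c_2''}$ would yield $d(c_1'',c_2'')\le 2\offguess$ by the triangle inequality, contradicting the packing property of $C''$. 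Finally, I would run matroid intersection between $\cM$ restricted to $U$ and the partition matroid whose parts are $\{B_{c''}:c''\in C''\}$, each of capacity one; declare failure if the intersection has fewer than $|C''|$ elements, and otherwise return the set $C'$ it produces.

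For the non-failure claim, suppose the input promise holds and fix an independent set $B\subseteq U$ with $d(c,B)\le\offguess$ for every $c\in C_E$, which in particular applies to every $c''\in C''$. For each $c''\in C''$, pick some $b(c'')\in B$ with $d(c'',b(c''))\le\offguess$, so that $b(c'')\in B_{c''}$. The map $c''\mapsto b(c'')$ is injective, as an element lying in two different $B_{c''}$'s is ruled out by the disjointness observation above. Hence $\{b(c''):c''\in C''\}$ is simultaneously independent in $\cM$ (being a subset of $B$) and in the partition matroid (one element per part), so the matroid-intersection step must return a common independent set of size at least $|C''|$, and \Call{EfficientMatroidCenter}{} does not fail.

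For the approximation guarantee, whenever the algorithm returns some $C'$ we have $|C'|\ge|C''|$; combined with the capacity-one constraint this forces $|C'\cap B_{c''}|=1$ for every $c''\in C''$, so in particular $d(c'',C')\le\offguess$. For an arbitrary $c\in C_E$, pick $c''\in C''$ with $d(c,c'')\le 2\offguess$ (which exists by the greedy construction) and let $x$ be the unique element of $C'\cap B_{c''}$; the triangle inequality then gives
\[
d(c,C')\le d(c,x)\le d(c,c'')+d(c'',x)\le 2\offguess+\offguess=3\offguess\,,
\]
which is the stated bound.

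The conceptual heart of the argument is in the second paragraph: the greedy choice of $C''$ is calibrated exactly so that the existential independent set $B$ promised by the input is automatically witnessed element-by-element across the parts $B_{c''}$, turning the promise into a genuine certificate for matroid intersection. Everything else is triangle-inequality bookkeeping. The only point I expect to need real care is the efficient implementation side (driving the $O(r^{3.5})$ term), but for the statement of this theorem -- which is purely about correctness and the approximation factor -- the plan above is essentially all that is needed.
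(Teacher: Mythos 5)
Your proposal is correct and follows essentially the same route as the paper's proof of this theorem (\Cref{thm:chencopy} in the appendix): your greedy construction of $C''$ is exactly the paper's marking loop, the disjoint balls $B_{c''}$ of radius $\offguess$ with a capacity-one partition matroid fed into matroid intersection are the same, and both the non-failure argument (injectively mapping each $c''$ to a nearby element of the promised set $B$, yielding a common independent set of size $|C''|$) and the $2\offguess+\offguess$ triangle-inequality chain match the paper's. No gaps.
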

\begin{proof}
  This theorem is proved as~\Cref{thm:chencopy} in the appendix.
  See~\Cref{sec:effic-appr-algor}.
\end{proof}
Now we prove the main result.
\begin{theorem}
  There is an efficient $(17+\eps)$-approximation one-pass algorithm for matroid
  center that stores at most $2(r^2+r)\log_{(1+\eps/17)} \Delta$ points and has
  total running time $O((nr + r^{3.5})\ldbe)$.  With a brute force algorithm,
  one can get a $(7+\eps)$-approximation in time $O((nr + r^{2r+2})\ldbe)$.
  \label{thm:main}
\end{theorem}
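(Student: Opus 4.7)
The plan is to run multiple copies of \textsc{MatroidCenter} in parallel, one per ``guess'' of $\opt$ in a geometric sequence, and at the end return the solution coming from the smallest successful guess. Let $R$ be the distance between the first two points of the stream; since $R/\Delta \le d_{\min} \le \opt \le d_{\max} \le R\Delta$, it suffices to take guesses spanning $[R/\Delta, R\Delta]$. For the efficient variant I set $\eps' := \eps/17$ and maintain, for each
\[
  \guess_i := (R/\Delta)(1+\eps')^i, \qquad i = 0, 1, \ldots, \left\lceil 2\log_{1+\eps'}\Delta\right\rceil,
\]
a parallel copy of \textsc{MatroidCenter}$(\guess_i,\text{efficient})$. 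At the end of the stream we pick the smallest $j$ whose instance does not abort and whose subsequent call to \textsc{EfficientMatroidCenter} does not fail, and return its output $C'$; for the brute-force variant we instead use $\eps' := \eps/7$ and the flag \texttt{brute force}.

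For correctness, let $i^\star$ be the smallest index with $\guess_{i^\star} \ge \opt$, so $\guess_{i^\star} \le (1+\eps')\opt$. \Cref{lem:existence} produces an independent set inside the summary $\bigcup_c I_c$ of the $i^\star$-th instance that is $5\guess_{i^\star}$-close to every pivot, whence \Cref{thm:chen} guarantees that \textsc{EfficientMatroidCenter} on this instance returns a set $C'$ with $d(c,C') \le 15\guess_{i^\star}$ for every pivot $c$. Thus $j \le i^\star$, and setting $\guess := \guess_j$, the triangle inequality combined with the fact that each stream point is within $2\guess$ of some pivot gives $d(e,C') \le 17\guess \le 17(1+\eps')\opt = (17+\eps)\opt$. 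For the brute-force variant, the set $C'_B$ returned by \Cref{alg:1pmc} satisfies $d(c,C'_B) \le 5\guess$ by construction, so the same argument yields cost at most $7\guess \le (7+\eps)\opt$.

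For the resource bounds, \Cref{lem:mcspace} gives $r^2+r$ points per instance, and the number of instances is at most $2\log_{1+\eps/17}\Delta$, so total space is as claimed. Processing each input point in a single instance costs $O(r)$ time (compute $\le r$ distances and at most one independence check), yielding $O(nr\ldbe)$ overall for the stream. In the efficient variant, the $O(r^{3.5})$-time call to \textsc{EfficientMatroidCenter} is made once per instance, adding $O(r^{3.5}\ldbe)$; in the brute-force variant, enumerating the $O(r^{2r})$ size-$r$ subsets of the $O(r^2)$-size summary and performing the $O(r^2)$ distance checks for each adds $O(r^{2r+2}\ldbe)$.

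The proof is largely bookkeeping on top of \Cref{lem:mcspace}, \Cref{lem:existence}, and \Cref{thm:chen}; the only mildly subtle point is calibrating $\eps'$ to the approximation factor of the final step so that the cumulative factor lands exactly at $17+\eps$ (respectively $7+\eps$). The further improvement from $\log\Delta$ to $\log(1/\eps)$ in the number of active guesses via stream-strapping is deferred to \Cref{sec:handling-guesses}.
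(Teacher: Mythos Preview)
Your proof is correct and follows the same approach as the paper's: run geometrically spaced copies of \textsc{MatroidCenter}, invoke \Cref{lem:mcspace}, \Cref{lem:existence}, and \Cref{thm:chen} for the smallest non-aborting guess, and add the $2\guess$ pivot-to-point distance on top of the $15\guess$ (resp.\ $5\guess$) bound. The only cosmetic difference is that you take $\eps'=\eps/7$ for the brute-force variant while the paper uses $\eps/17$ uniformly; both are fine and yours is in fact slightly tighter in space.
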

\begin{proof}
  The algorithm is as follows.  Let $\delta$ be the distance between the first two
  points.  Then for $2\log_{1+\eps/17}\Delta$ guesses $\guess$ of $\opt$
  starting from $\delta/\Delta$ to $\delta\Delta$, we run
  \Call{MatroidCenter}{$\guess$, flag}.  We return the set of centers returned
  by the instance corresponding to the smallest guess $\guess$.
  \Cref{lem:mcspace} gives the desired space bound.

  \textbf{Case 1.} flag = ``brute force''.

  Suppose the algorithm returned $C'_B$.  \Cref{lem:existence} guarantees that
  for $\guess\in[\opt,(1+\eps/17)\opt)$, the algorithm will not abort.  Then, by
  the check on~\Cref{line:bruteforcecheck}, cost of $C'_B$ is at most
  $7\guess \le (7+\eps)\opt$.  For each guess: a point in the stream is
  processed in time $O(r)$, and the postprocessing time is $O(r^{2r+2})$.

  \textbf{Case 2.} flag = ``efficient algorithm''.
  
  Let the algorithm returned $C'$.  \Cref{thm:chen} guarantees that for
  $\guess\in[\opt,(1+\eps/17)\opt)$, the algorithm will not abort.
  By~\Cref{thm:chen} for any $c\in C_E$, we have $d(c,C')\le 15\guess$.  Since
  we forget only the points within distance $2\guess$ of $C_E$, we get that for
  any point $e$ in the stream, $d(e,C')\le 17\guess\le(17+\eps)\opt$.  For each
  guess: a point in the stream is processed in time $O(r)$; and using the
  matroid intersection algorithm of Cunningham~\cite{Cunningham86} in
  \Call{EfficientMatroidCenter}{}, the postprocessing time is $O(r^{3.5})$.
  This gives that total running time is $O((nr + r^{3.5})\ldbe)$.
\end{proof}

We make some remarks.

\begin{remark}
  We do need to know the rank of the matroid (or an upper bound), otherwise we
  cannot control the space usage.  The instances run using a very small guess
  may store a very large number of pivots without the check
  on~\Cref{line:sizecheck}.
\end{remark}

\begin{remark}
  We can decrease the space usage to $O(r^2\log(1/\eps)/\eps)$ points using the
  parallelization ideas of Guha~\cite{Guha09}.  To make the ideas work, we do
  need some properties of matroids.  We give the details
  in~\Cref{sec:handling-guesses}.
\end{remark}

\begin{remark}
  By running $\binom{|E|}{2}$ guesses, \Call{EfficientMatroidCenter}{} can be
  used to get an offline $3$-approximation algorithm for a more general version
  of matroid center, where the cost is computed with respect to a subset $C_E$
  of $E$ and any point in $E$ can be a center.
\end{remark}

\subsection{Extension to Knapsack Center}
\label{sec:extens-knaps-cent}
Recall that in the knapsack center problem, each point $e$ has a non-negative
weight $w(e)$, and the goal is to select a set $C$ of centers that minimizes the
maximum distance between a point and its nearest center subject to the
constraint that $\sum_{c\in C}w(c)\le B$, where $B$ is the \emph{budget}.  We
modify \Cref{alg:1pmc} slightly to give an algorithm for knapsack center using
space $r$ factor smaller than the matroid case, where, in this case, $r$ is the size of a largest
feasible set.  We make sure that all $I_c$ variables are singletons, so the
algorithm stores at most $2r$ points.  Instead of the if condition on
\Cref{line:scaddif}, we replace the point $x$ in $I_c$ by $e$ if $w(x) > w(e)$.
This idea works because replacing a point by a nearby point with a smaller
weight does not affect the feasibility in the knapsack setting (which could
destroy independence in the matroid setting).  Let $C_E$ be the set of pivots at
the end of the stream.  By almost the same argument as in the proof of
\Cref{lem:existence}, we get the following.
\begin{lemma}
\label{lem:existence_knapsack}
  Let $\guess\ge\opt$.  Then there
  exists a feasible set $K\subseteq \bigcup_{c\in C_E}I_c$ such that
  $d(e,K) \le 7\guess$ for any point $e$ and $d(c,K)\le 5\guess$ for any pivot
  $c\in C_E$.
\end{lemma}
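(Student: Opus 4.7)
The plan is to mirror the proof of \Cref{lem:existence}, replacing the matroid-exchange argument (\Cref{lem:matstruct}) with a weight-accounting argument that leverages the minimum-weight replacement rule used by the knapsack-modified algorithm.

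First I would fix an optimum feasible set $S^\star$ with $\sum_{\sigma\in S^\star}w(\sigma)\le B$. Since pivots in $C_E$ are pairwise more than $2\guess$ apart and $\guess\ge\opt$, each pivot $c\in C_E$ must be served by a distinct optimum center $\sigma_c\in S^\star$ with $d(c,\sigma_c)\le\guess$; the same reasoning also shows the algorithm does not abort for this $\guess$. Next, for each $\sigma_c$, I would identify $a_c\in C_E$: the pivot that was in play when $\sigma_c$ arrived in the stream -- either an earlier pivot within distance $2\guess$, or $\sigma_c$ itself if it became a new pivot. In either case $d(\sigma_c,a_c)\le 2\guess$. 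The key knapsack-specific observation is that, immediately after $\sigma_c$ was processed, the singleton $I_{a_c}$ contained a point of weight at most $w(\sigma_c)$ (either because $\sigma_c$ displaced the incumbent, or because the incumbent was already no heavier than $\sigma_c$); and since subsequent updates only decrease weights, the final point $x_{a_c}\in I_{a_c}$ satisfies $w(x_{a_c})\le w(\sigma_c)$.

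Define $K:=\{x_{a_c}:c\in C_E\}\subseteq\bigcup_{c\in C_E}I_c$. For feasibility, let $A=\{a_c:c\in C_E\}$ and for each $a\in A$ pick an arbitrary $c_a\in C_E$ with $a_{c_a}=a$; then
\[
\sum_{k\in K}w(k)\;=\;\sum_{a\in A}w(x_a)\;\le\;\sum_{a\in A}w(\sigma_{c_a})\;\le\;\sum_{\sigma\in S^\star}w(\sigma)\;\le\;B,
\]
using that the $\sigma_{c_a}$ are distinct elements of $S^\star$ (because $c\mapsto\sigma_c$ is injective). The step I expect to require the most care is precisely this bookkeeping: the map $c\mapsto a_c$ need not be injective, so one cannot naively sum $w(x_{a_c})$ over $c\in C_E$, and selecting one representative per fiber is what makes the accounting tight. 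This is the analogue of, and replacement for, the matroid exchange argument used in \Cref{lem:existence}.

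The distance bounds follow from the same triangle-inequality chain depicted in \Cref{fig:mccost}. For any stream point $e$, I pick $c(e)\in C_E$ with $d(e,c(e))\le 2\guess$ and chain $e\to c(e)\to\sigma_{c(e)}\to a_{c(e)}\to x_{a_{c(e)}}$ to obtain
\[
d(e,K)\;\le\;2\guess+\guess+2\guess+2\guess\;=\;7\guess,
\]
and for $c\in C_E$, dropping the leading $2\guess$ leg of the chain yields $d(c,K)\le 5\guess$, as required.
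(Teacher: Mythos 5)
Your proof is correct and follows exactly the route the paper intends: it mirrors the triangle-inequality chain of Lemma~\ref{lem:existence} while replacing the matroid-exchange step (Lemma~\ref{lem:matstruct}) with the weight-accounting argument enabled by the minimum-weight replacement rule, which is precisely the substitution the paper describes when it says the lemma follows ``by almost the same argument.'' The care you take with the non-injectivity of $c\mapsto a_c$ (charging each $x_a$ to a single distinct optimum center) is the right way to make the feasibility bound $\sum_{k\in K}w(k)\le B$ rigorous, a detail the paper leaves implicit.
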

For the efficient version, we then use the $3$-approximation algorithm by
Hochbaum and Shmoys~\cite{Hochbaum86}.
\begin{theorem}
  There is an efficient $(17+\eps)$-approximation one-pass algorithm for
  knapsack center that stores at most $4r\log_{(1+\eps/17)} \Delta$ points, where
  $r$ is the size of a largest feasible set.  With a brute force algorithm, one
  can get a $(7+\eps)$-approximation.
  \label{thm:knapsack}
\end{theorem}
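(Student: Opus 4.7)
The plan is to mirror the proof of Theorem~\ref{thm:main}, substituting Lemma~\ref{lem:existence_knapsack} for Lemma~\ref{lem:existence} and the offline $3$-approximation of Hochbaum and Shmoys~\cite{Hochbaum86} for \Call{EfficientMatroidCenter}{}. I would let $\delta$ denote the distance between the first two points of the stream and run in parallel $2\log_{1+\eps/17}\Delta$ copies of the modified one-pass procedure of~\Cref{sec:extens-knaps-cent}, one for each guess $\guess$ ranging geometrically from $\delta/\Delta$ to $\delta\Delta$ with common ratio $1+\eps/17$; the output is the solution coming from the smallest guess whose instance does not abort.

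For the space bound, I would argue as in Lemma~\ref{lem:mcspace}: any two pivots are at pairwise distance strictly greater than $2\guess$, so for every guess $\guess\ge\opt$ the pivots must be served by distinct members of an optimum feasible set and hence $|C_E|\le r$. Combined with the modification that each $I_c$ is kept as a singleton, each surviving instance stores at most $2r$ points; summing across the $2\log_{(1+\eps/17)}\Delta$ guesses gives the claimed $4r\log_{(1+\eps/17)}\Delta$ bound.

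For correctness, I would focus on the smallest guess $\guess\in[\opt,(1+\eps/17)\opt)$, for which Lemma~\ref{lem:existence_knapsack} guarantees that the instance does not abort and that $\bigcup_{c\in C_E}I_c$ contains a feasible set covering every pivot to within $5\guess$ and every stream point to within $7\guess$. The brute-force variant enumerates feasible subsets of this summary and returns one of cost at most $7\guess\le(7+\eps)\opt$. In the efficient variant I would invoke the Hochbaum--Shmoys algorithm on the summary, with demand set $C_E$, candidate pool $\bigcup_{c\in C_E}I_c$, and guess $\offguess=5\guess$; Lemma~\ref{lem:existence_knapsack} certifies that this $\offguess$ is a valid upper bound on the restricted optimum, so the algorithm returns a feasible $K'\subseteq\bigcup_{c\in C_E}I_c$ with $d(c,K')\le 3\offguess=15\guess$ for every $c\in C_E$. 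Extending to an arbitrary stream point $e$ via its nearest pivot $c(e)$ and the triangle inequality yields
\[
 d(e,K')\le d(e,c(e))+d(c(e),K')\le 2\guess+15\guess=17\guess\le(17+\eps)\opt.
\]

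The main obstacle I anticipate is justifying that Hochbaum--Shmoys retains its $3$-approximation guarantee when restricted to the summary, since it is classically stated with demand set equal to the candidate pool. Resolving this parallels the adaptation of Chen et al.'s algorithm carried out for matroid center in \Cref{sec:effic-appr-algor}: threshold the metric at $\offguess$, use the feasible set promised by Lemma~\ref{lem:existence_knapsack} as the starting witness of feasibility, and then apply the standard Hochbaum--Shmoys rounding, which relies only on triangle inequality and on the fact that replacing a chosen center by a nearby point of smaller weight preserves the knapsack constraint, to produce a $K'$ within $3\offguess$ of every demand in $C_E$.
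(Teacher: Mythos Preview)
Your proposal is correct and follows exactly the approach the paper takes: the paper's own treatment of \Cref{thm:knapsack} is a one-paragraph sketch that says to modify \Cref{alg:1pmc} by keeping each $I_c$ a singleton (the lightest nearby point), invoke \Cref{lem:existence_knapsack} in place of \Cref{lem:existence}, and plug in Hochbaum--Shmoys in place of \Call{EfficientMatroidCenter}{}; your write-up simply fills in these details along the template of the proof of \Cref{thm:main}. One minor point: your space argument bounds $|C_E|\le r$ only for guesses $\guess\ge\opt$, but the bound for smaller guesses is enforced by the abort on \Cref{line:sizecheck} in the underlying procedure, which you are implicitly inheriting.
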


\subsection{An Efficient Two Pass Algorithm}
\label{sec:an-efficient-two}

This algorithm is a streaming two-pass simulation of the offline
$3$-approximation algorithm of Chen et al.~\cite{Chen2016} for matroid center.
We describe the algorithm and give the analysis below.

In our one-pass algorithm, i.e.~\Cref{alg:1pmc}, say we are promised that for
any pivot $c$, the optimum center that serves it appears after $c$.  Then it is
enough to try to add $e$ to $I_c$ whenever $d(e,c)\le \guess$; we call this a
modified check.  Let $C_E$ be the set of pivots in the end, then
$(I_c)_{c\in C_E}$ form a partition such that if we pick one point from each
$I_c$ to get set $B$, we can serve each point in $C_E$ using $B$ with cost at
most $\guess$.  With the modified check, for $c,c' \in C_E$ such that
$c\neq c'$, the optimum points $s_c$ and $s_{c'}$ that serve them are also
different because $d(c,c') > 2\guess$.  Now, $s_c \in \spn (I_c)$ due to the
promise that $s_c$ arrived after $c$, and~\Cref{lem:matstruct} gives us the
required independent set $B$.  We then define a partition matroid $\cM_C$ with
partition $(I_c)_{c\in C_E}$ and capacities $1$ and solve the matroid
intersection problem on $\cM_c$ and $\cM$ restricted to $\bigcup_{c\in C_E} I_c$
and get the output $C'$.  Existence of $B$ guarantees that $|C'| = |C_E|$, thus
we are able to serve all points in $C_E$ at a cost of $\guess$.  Since the
points we forget are within distance $2\guess$ of $C_E$, our total cost is at
most $3\guess$ by triangle inequality.  We can get rid of the assumption that
$s_c$ arrives after $c$ by having a second pass through the stream.  We give a
formal description in~\Cref{alg:2pmc}.

\begin{algorithm}
  \begin{algorithmic}[1]
    \Function{MatroidCenter2p}{$\guess$}
    


    \State $C\gets \emptyset$.

    \For{each point $e$ in the stream} \hspace{20pt} \#First pass.

    \If{$d(e,C)\ge 2\guess$}

    \State $C \gets C\cup\{e\}$.

    \State If $\{e\}$ is not a loop, $I_e \gets \{e\}$, else
    $I_e\gets\emptyset$.

    \EndIf

    \EndFor

    \For{each point $e$ in the stream} \hspace{20pt} \#Second pass.

    \If{$\exists c\in C$ such that $d(e,c)\le \guess$ (there can be at most one
      such $c$)}

    \If{$I_c\cup\{e\}$ is independent}

    \State $I_c \gets I_c\cup\{e\}$.

    \EndIf

    \EndIf

    \EndFor

    \State Let $\cM_C=(\bigcup_{c \in C} I_c,\cI_C)$ be a partition matroid with
    partition $\{I_c: c\in C\}$ and capacities $1$.

    \State Let $\cM'$ be the matroid $\cM$ restricted to $\bigcup_{c \in C} I_c$.

    \State $C' \gets $ \Call{matroid-intersection}{$\cM_C$, $\cM'$}

    \If{$|C'| < |C|$}

    \State Return fail with $C$ as certificate.

    \EndIf

    \State Return $C'$.

    \EndFunction
    \caption{Two pass algorithm for matroid center.}
    \label{alg:2pmc}
  \end{algorithmic}
\end{algorithm}

As in the one-pass algorithm, we run $2\log_{1+\eps/3}\Delta$ guesses $\guess$
of $\opt$.  We return the set of centers returned by the instance corresponding
to the smallest guess.  For $\guess\in[\opt,(1+\eps/3)\opt)$, the algorithm will
not abort due to existence of the independent set $B$ (which we argued earlier).
This gives us the following theorem; again we use the matroid intersection
algorithm of Cunningham~\cite{Cunningham86} to get the time bound.

\begin{theorem}
  There is an efficient $(3+\eps)$-approximation two-pass algorithm for matroid
  center that stores at most $2(r^2+r)\log_{(1+\eps/3)} \Delta$ points and has
  total running time $O((nr + r^{3.5})\ldbe)$.
  \label{alg:two-pass-algorithm}
\end{theorem}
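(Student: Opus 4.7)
The plan is to run \Call{MatroidCenter2p}{$\guess$} in parallel for $2\log_{1+\eps/3}\Delta$ geometrically spaced guesses of $\opt$, ranging from $\delta/\Delta$ to $\delta\Delta$ (where $\delta$ is the distance between the first two points), and return the output of the instance with the smallest non-failing guess. The space bound is immediate: each instance holds at most $r$ pivots and at most $r$ elements per $I_c$ (both bounded by $\rank(\cM)=r$), so at most $r^2+r$ points per instance, and the number of instances equals the number of guesses.

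For correctness I would argue that any instance with $\guess\ge\opt$ does not fail and returns a set of cost at most $3\guess$. Let $C$ be the pivot set at the end of the first pass and let $I^*$ be an optimum independent set. For every $c\in C$, some $s^*\in I^*$ satisfies $d(c,s^*)\le\opt\le\guess$; fix such a witness $s^*_c$. Since pivots are separated by more than $2\guess$, the map $c\mapsto s^*_c$ is injective. In the second pass, when $s^*_c$ arrives it lies within $\guess$ of $c$ and (by pivot separation) of no other pivot, so the algorithm attempts to add $s^*_c$ to $I_c$; whether or not the addition succeeds, $s^*_c\in\spn(I_c)$ by the end of the pass (using monotonicity of span). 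Setting $S:=\{s^*_c:c\in C\}$, which is an independent set of size $|C|$ as a subset of $I^*$, an application of \Cref{lem:matstruct} to the family $(I_c)_{c\in C}$ and $S$ (via the bijection $s^*_c\mapsto c$) yields an independent set $B\subseteq\bigcup_{c\in C}I_c$ with $|B\cap I_c|\ge 1$ for every $c$. The $I_c$'s are pairwise disjoint (again by pivot separation combined with the $\guess$-radius rule of the second pass), so $|B\cap I_c|=1$ and $B$ is also independent in $\cM_C$. Hence $B$ is a common independent set of $\cM'$ and $\cM_C$ of size $|C|$, so the matroid intersection subroutine returns $C'$ of size $|C|$ and the failure check is not triggered.

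For the cost, since $|C'|=|C|$ and $C'$ is independent in the partition matroid $\cM_C$, it contains exactly one point of each $I_c$, which lies within $\guess$ of $c$ by the second-pass rule. Any stream point $e$ lies within $2\guess$ of some pivot $c\in C$ (otherwise $e$ would have become a pivot itself), so by the triangle inequality $d(e,C')\le 2\guess+\guess=3\guess$. The smallest non-failing guess $\guess$ satisfies $\guess<(1+\eps/3)\opt$, hence the returned solution has cost at most $(3+\eps)\opt$.

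For the running time, each pass processes each point in $O(r)$ time (distance queries to at most $r$ pivots plus at most one matroid-oracle call), and Cunningham's matroid intersection algorithm~\cite{Cunningham86} on $\bigcup_c I_c$ (at most $r^2$ elements, rank $r$) runs in $O(r^{3.5})$ time. Summing $O(nr+r^{3.5})$ per guess over the $O(\ldbe)$ guesses yields the claimed bound. The main subtlety I anticipate is justifying $s^*_c\in\spn(I_c)$ at the end of the second pass: this relies crucially on using the stricter threshold $d(e,c)\le\guess$ in the second pass (versus $2\guess$ for pivot creation), which together with pivot separation ensures each $s^*_c$ is attempted at the unique correct pivot, exactly matching the hypotheses of \Cref{lem:matstruct}.
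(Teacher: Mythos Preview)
Your proposal is correct and follows essentially the same argument as the paper: run $2\log_{1+\eps/3}\Delta$ parallel guesses, use pivot separation to show the witnesses $s^*_c$ are distinct and (via the $\guess$-radius second-pass rule) land in $\spn(I_c)$, invoke \Cref{lem:matstruct} to exhibit a common independent set of size $|C|$ so matroid intersection succeeds, and finish with the $2\guess+\guess$ triangle-inequality cost bound and Cunningham's running time. Your write-up is in fact more explicit than the paper's (which gives only a short paragraph before stating the theorem), in particular spelling out why the $I_c$'s are disjoint and hence why $B$ is independent in the partition matroid $\cM_C$.
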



\section{Matroid Center with Outliers}
\label{sec:one-pass-algorithm-ouliers}
We first present a simplified analysis of McCutchen and Khuller's
algorithm~\cite{McCutchenK08} for $k$-center with $z$ outliers.  This abstracts
their ideas and sets the stage for the matroid version that we will see later.

\subsection{McCutchen and Khuller's Algorithm}
\label{sec:bett-analys-mccutch}

As usual, we start with a guess $\guess$ for the optimum cost.  The algorithm
maintains a set $C$ of pivots such that $|\bB(c,2\guess)| \ge z+1$ for any
$c\in C$, so the optimum has to serve at least one of these nearby points.
(Recall that $\bB(e,\alpha)=\{x\in E: d(e,x) \le \alpha\}$.)  When a new point
arrives, it is ignored if it is within distance $4\guess$ of $C$.  Otherwise it
is added to the set $F$ of ``free'' points.  As soon as the size of $F$ reaches
$(k - |C| +1)z + 1$, we know for sure that, for a correct guess, the optimum
will have to serve the free points with at most $k - |C|$ clusters, and one of
those clusters will have more than $z$ points by the generalized pigeonhole
principle.  Hence, there \emph{must} exist a free point that has at least $z$
other points within distance $2\guess$ in $F$, because its cluster diameter is at most $2\guess$.  This gives us a new pivot
$c \in F$ with its support points.  We remove those points in $F$ that are
within distance $4\guess$ of $c$ and continue to the next element in the stream.
In the end, we will be left with at most $(k - |C|+1)z$ free points, and they
are served by at most $k - |C|$ optimum centers.  On these remaining free
points, we run an offline $2$-approximation algorithm for $(k - |C|)$-center
with $z$ outliers, e.g., that of Chakrabarty et al.\cite{ChakrabartyGK16}.
\Cref{alg:mkkco} gives a formal description.  We note that we do not need the
sets $A_c$ for $c \in C$ in the algorithm, but we need them in the analysis.

\begin{algorithm}[!h]
  \begin{algorithmic}[1]
    \Function{k-center-z-outliers}{$\guess$}



    \State Pivot-set $C \gets \emptyset$, free-point set $F \gets \emptyset$,
    and $\ell \gets 0$.

    \For{each point $e$ in the stream}

    \If{$d(e,C) > 4\guess$}

    \State $F \gets F \cup \{e\}$.

    \EndIf

    \If{$|F| = (k-\ell+1)z + 1$} \hspace{20pt} \#there is a
    new pivot among the free points;

    \State Let $c \in F$ be such that $|\bB(c,2\guess)\cap F| \ge z+1$
    \hspace{20pt} \#such $c$ will exist for a correct guess.

    \If{such $c$ does not exist}

    \State Abort. \label{line:Fcheck}

    \EndIf

    \State $C \gets C \cup \{c\}$.

    \State $F \gets F \setminus \bB(c,4\guess)$.  \label{line:Freduce}

    \State $A_c \gets \{c\} \cup$ arbitrary subset of
    $\bB(c,2\guess) \setminus \{c\}$ of size $z$.

    \State $\ell \gets \ell + 1$.

    \If{$\ell = k+1$} \hspace{20pt} \# guess is wrong.

    \State Abort.  \label{line:abort1}

    \EndIf

    \EndIf

    \EndFor

    \State $C_F\gets$ $2$-approximation for $(k-\ell)$-center with $z$ outliers
    on $F$ by an efficient offline algorithm.

    \State \Return $C' \gets C \cup C_F$.

    \EndFunction
  \end{algorithmic}
  \caption{McCutchen and Khuller's algorithm~\cite{McCutchenK08} for $k$-center
    with $z$ outliers.}
  \label{alg:mkkco}
\end{algorithm}
Let us bound the space usage first.  The variable $C$ contains at most $k$
pivots, otherwise we abort on~\Cref{line:abort1},
and~\Cref{line:Fcheck,line:Freduce} make sure that the variable $F$ contains at
most $(k+1)z+1$ points.  In total, we store at most $(k+1)z+1$ points at any
moment.

\begin{lemma}
  For $\guess \ge \opt$, \Call{k-center-z-outliers}{$\guess$} stores at most
  $(k+1)z+1$ points, and the cost of $C'$ returned by
  \Call{k-center-z-outliers}{$\guess$} is at most $4\guess$.
  \label{lem:mkkco}
\end{lemma}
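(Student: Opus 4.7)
The plan is to handle the space and cost bounds separately. The space bound follows by combining $|C| \le k$ (enforced by~\Cref{line:abort1}) with $|F| \le (k-\ell+1)z+1$ where $\ell = |C|$ (maintained by the check on~\Cref{line:Fcheck} and the removal on~\Cref{line:Freduce}): for $z \ge 1$, this gives $|C| + |F| \le \ell + (k-\ell+1)z + 1 \le (k+1)z + 1$.

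For the cost bound, the key structural property is disjointness of the support sets: for pivots $c_1, c_2 \in C$ added in that order, $A_{c_1} \cap A_{c_2} = \emptyset$, because every point of $A_{c_2}$ lies in $F$ at the moment $c_2$ is selected and hence is at distance more than $4\guess$ from $c_1$, while every point of $A_{c_1}$ lies in $\bB(c_1, 2\guess)$. Fixing $\guess \ge \opt$, since $|A_c| = z+1$ and the optimum has at most $z$ outliers, at least one $p_c \in A_c$ is served by some optimum center $s_c$. The crucial claim is that the assignment $c \mapsto s_c$ is injective: if $c_1, c_2$ (with $c_1$ first) shared the same $s$, then $d(p_{c_1}, p_{c_2}) \le 2\guess$; but $p_{c_2} \in F$ at the time $c_2$ is chosen implies $d(p_{c_2}, c_1) > 4\guess$, while the triangle inequality yields $d(p_{c_2}, c_1) \le d(p_{c_2}, p_{c_1}) + d(p_{c_1}, c_1) \le 4\guess$, a contradiction.

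Injectivity gives $\ell \le k$ throughout, so~\Cref{line:abort1} never fires; a companion pigeonhole argument on the $(k-\ell+1)z+1$ points of $F$ when~\Cref{line:Fcheck} is triggered (at most $z$ outliers, the remaining points served by the $k-\ell$ unused optimum centers, so some center covers $\ge z+1$ of them, which are pairwise within $2\guess$) shows~\Cref{line:Fcheck} also never aborts. Finally, any point $p \notin F$ at the end satisfies $d(p, C) \le 4\guess$ (either $p$ was within $4\guess$ of $C$ on arrival, or was later removed from $F$ via~\Cref{line:Freduce}). For $p \in F$ at the end, $p$ cannot be served by an already-charged $s_{c_i}$ (else $d(p, c_i) \le 2\guess + 2\guess = 4\guess$, contradicting $p \in F$), so the $k-\ell$ unused optimum centers together with up to $z$ outliers form a valid $(k-\ell)$-center-with-$z$-outliers solution on $F$ of cost $\le \guess$; the offline $2$-approximation $C_F$ then achieves cost $\le 2\guess$ on $F$. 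Combining, $C' = C \cup C_F$ has cost $\le \max(4\guess, 2\guess) = 4\guess$ with at most $z$ outliers. I expect the injectivity step above to be the main obstacle; once it is in hand, the cluster/outlier accounting on $F$ is routine.
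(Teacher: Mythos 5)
Your proof follows the paper's argument essentially step for step: the pairwise-$2\guess$ separation of the support sets $A_c$, the injective charging of pivots to distinct optimum centers, the observation that charged centers cannot serve surviving free points, and the final triangle-inequality accounting all match the paper's, and your explicit check that neither abort fires is a welcome addition. One small correction: the offline $2$-approximation on $F$ can only open centers from $F$ itself, and the unused optimum centers need not lie in $F$; the correct benchmark is the cost-$2\guess$ solution obtained by moving each such center to a free point within distance $\guess$ of it, so $C_F$ is only guaranteed cost $4\guess$ on $F$, not $2\guess$. This does not affect your final bound, since the non-free points already force $4\guess$.
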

\begin{proof}
  Let $C_E$ be the set of pivots and $F_E$ be the set of free points when the
  stream ended, and let $|C_E| = \ell_E$.  We claim that for any $c\neq c'$,
  where $c,c'\in C_E$, $e \in A_c$, and $e'\in A_{c'}$, we have
  $d(e,e') > 2\guess$.  We now prove this claim.  Assume without loss of
  generality that $c$ was made a pivot before $c'$ by the algorithm.  So points
  within distance $4\guess$ of $c$ were removed from $F$.  Any point that
  existed in $F$ after this removal, in particular $e'$, must be farther than
  $4\guess$ from $c$.  This implies that
  \[
    4\guess < d(c,e') \le d(c,e) + d(e,e'), \;\;\;\text{ and }\;\;\;
    d(e,e') > 2\guess\,,
  \]
  because $d(c,e)\le 2\guess$.  Now, we know that for $c \in C_E$, there exists
  $x_c \in A_c$ that has to be served by an optimum center, say $s_c$, because
  $|A_c| > z$, so not all of the points in $A_c$ can be outliers.  By the
  earlier claim, for $c\neq c'$, we have $d(x_c,x_{c'}) > 2\guess$ implying that
  $s_c\neq s_{c'}$ and $\ell_E \le k$.  Also note that none of these optimum
  centers can serve a point in $F_E$, because by triangle inequality
  \[
    d(s_c,F_E) \ge d(c,F_E) - d(c,x_c) - d(x_c,s_c) > 4\guess - 2\guess - \guess
    = \guess
  \]
  for $c\in C_E$.  This shows that all but $z$ points in $F_E$ have to be served
  by at most $k - \ell_E$ optimum centers with cost at most $\guess$.  For each
  of these optimum centers, there exists a free point in $F_E$ within distance
  $\guess$.  So there exists a set $B_F$ of $k-\ell_E$ points in $F_E$, such
  that $B_F$ covers all but at most $z$ points of $F_E$ with cost $2\guess$.  So
  a $2$-approximation algorithm recovers $k - \ell_E$ centers with cost at most
  $4\guess$.  Observing that we only forget points in the stream that are within
  distance $4\guess$ of some pivot in $C_E$ finishes the proof.
\end{proof}

By running \Call{k-center-z-outliers}{$\guess$} for at most
$O(\log(1/\eps)/\eps)$ geometrically-increasing active guesses, we get the
$(4+\eps)$-approximation algorithm for $k$-center with $z$ outliers.  This
analysis is based on that of McCutchen and Khuller~\cite{McCutchenK08}.


\subsection{Matroid Center with Outliers}
\label{sec:matroid-center-with}

It is now possible to naturally combine the ideas used for matroid center and
those used for $k$-center with $z$ outliers to develop an algorithm for matroid
center with $z$ outliers.

Whenever the free-point set becomes large enough, we create a pivot $c$ and an
independent set $I_c$ to which we try to add all free points within distance
$4\guess$ of $c$.  We do the same for a new point $e$ in the stream that is
within distance $4\guess$ of some pivot $c \in C$, i.e., we try to add it to
$I_c$ keeping $I_c$ independent in the matroid.  Otherwise $d(e,C) > 4\guess$,
so we make it a free point.  The structural property of matroids that we proved
as~\Cref{lem:matstruct} then enables us to show that $\bigcup_{c\in C} I_c$ and
the set of free points make a good summary of the stream.  See~\Cref{alg:1pmcwo}
for a formal description.  Here, we note that we do not need the sets $A_c$ for
$c \in C$ in the algorithm if flag is set to ``brute force'', but we need them
in the analysis in any case.

\begin{algorithm}[!h]
  \begin{algorithmic}[1]
    \Function{matroid-center-z-outliers}{$\guess$, flag}



    \State Pivot-set $C \gets \emptyset$, free-point set $F \gets \emptyset$,
    and $\ell \gets 0$.

    \For{each point $e$ in the stream}

    \If{$\exists c\in C$ such that $d(e,c)\le 4\guess$}
      
    \If{$I_c\cup\{e\}$ is independent}

    \State $I_c \gets I_c\cup\{e\}$.  \label{line:scadd1}

    \EndIf
          
    \Else

    \State $F \gets F\cup\{e\}$.

    \EndIf

    \If{$|F| = (r-\ell+1)z + 1$}

    \State Let $c \in F$ be such that $|\bB(c,2\guess)\cap F| \ge z+1$ (if not,
    we guessed wrong, so abort).

    \State $C \gets C \cup \{c\}$.

    \State $A_c\gets \{c\}$ and if $\{c\}$ is not a loop, $I_c \gets \{c\}$,
    else $I_c\gets\emptyset$.

    \State $\ell \gets \ell + 1$ (if $\ell$ becomes $r+1$ here, we guessed
    wrong, so abort).

    \For{each $x \in F\cap \bB(c,4\guess)$}

    \State $F \gets F\setminus\{x\}$.
        
    \If{$I_c\cup\{x\}$ is independent}

    \State $I_c \gets I_c\cup\{x\}$.  \label{line:scadd2}

    \EndIf

    \If{$|A_c| \le z$}

    \State $A_c \gets A_c\cup\{x\}$.

    \EndIf

    \EndFor

    \EndIf

    \EndFor

    \If{flag = ``brute force''}

    \State Find an independent set $C'_B$ in $F \cup \bigcup_{c\in C}I_c$ by
    brute force such that cost of $C'_B$ is at most $11\guess$ \\
    \hspace{50pt}with respect to $C$ and at most $9\guess$ with respect to all
    but at most $z$ points of $F$.

    \State If such $C'_B$ does not exist, abort, else return $C'_B$.

    \EndIf

    \If{flag = ``efficient''}

    \State Run the offline $3$-approximation algorithm by Harris et
    al.~\cite{Harris17} for matroid center with \\
    \hspace{50pt}$z$ outliers to get an independent set $C'$ of centers in
    $F \cup \bigcup_{c\in C}({A_c \cup I_c})$ such that cost of $C'$  \\
    \hspace{50pt}is at most $47\guess$ with respect to $C$ and at most $45\guess$
    with respect to all but $z$ points of $F$.

    \State If such $C'$ does not exist, abort, else return $C'$.

    \EndIf

    \EndFunction
    \caption{One-pass algorithm for matroid center with outliers.}
    \label{alg:1pmcwo}
  \end{algorithmic}
\end{algorithm}
Let $C_E$ be the set of pivots and $F_E$ be the set of free points when the
stream ended, and let $\ell_E = |C_E|$.



\begin{lemma}[Small summary for matroid center with outliers] For
  $\guess \ge \opt$, \Cref{alg:1pmcwo} stores at most $O(r^2 +rz)$ points, and
  there exists an independent set $B\subseteq F_E \cup \bigcup_{c\in C_E}I_c$
  such that cost of $B$ is at most $15\guess$; also $d(c,B)\le 11\guess$ for any
  pivot $c\in C_E$, and $B$ covers all but at most $z$ points of $F_E$ with cost
  at most $9\guess$.
  \label{lem:existence_wo}
\end{lemma}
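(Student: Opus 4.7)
The plan is to prove the space bound by inspecting the algorithm's variables, and then construct $B$ by adapting the argument for~\Cref{lem:existence} to the outlier setting using the MK-style separation that underlies~\Cref{lem:mkkco}. For the space bound, $|C_E|\le r$ by the abort check, each $I_c$ has size at most $r$ as an independent set, each $A_c$ has size at most $z+1$ by construction, and $|F_E|\le(r+1)z$ since a new pivot is created as soon as $|F|$ reaches $(r-\ell+1)z+1$. Summing yields $O(r^2+rz)$.

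To build $B$, fix an optimum independent set $I^*$. Since $|F\cap\bB(c,2\guess)|\ge z+1$ at the moment $c$ is created, I arrange the inner loop to iterate over $\bB(c,2\guess)\cap F$ first, so that $A_c\subseteq\bB(c,2\guess)$ with $|A_c|=z+1$; at least one $x_c\in A_c$ is not an $I^*$-outlier, so some $s_c\in I^*$ satisfies $d(x_c,s_c)\le\guess$ and hence $d(c,s_c)\le 3\guess$. The $s_c$'s are pairwise distinct: if $c$ is created before $c'$, every point of $A_{c'}$ lies outside $\bB(c,4\guess)$ (purged from $F$ when $c$ was created) while $A_c\subseteq\bB(c,2\guess)$, so $d(x_c,x_{c'})>2\guess$ rules out $s_c=s_{c'}$. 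The same inequality gives $d(s_c,F_E)>\guess$, so no $s_c$ serves any point of $F_E$.

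Now set $I^*_{\mathrm{rel}}:=\{s_c:c\in C_E\}\sqcup\{s\in I^*: s\text{ covers some }f\in F_E\}$, a disjoint union by the previous remark. To each $s\in I^*_{\mathrm{rel}}$ I attach a target set inside $F_E\cup\bigcup_{c\in C_E}I_c$: if $s\in F_E$ the target is the singleton $\{s\}$; otherwise $s$ was tried for insertion into some $I_{a(s)}$ with $a(s)\in C_E$ and $d(s,a(s))\le 4\guess$, and the target is $I_{a(s)}$. In either case $s$ lies in the span of its target (either $s\in I_{a(s)}$ or the insertion failed because $s\in\spn(I_{a(s)})$), so~\Cref{lem:matstruct} (after collapsing duplicate targets) yields an independent $B\subseteq F_E\cup\bigcup_{c\in C_E}I_c$ hitting every target. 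The three distance claims now follow from triangle inequality: for a pivot $c$, picking $b\in B\cap I_{a(s_c)}$ gives $d(c,b)\le 3\guess+4\guess+4\guess=11\guess$; for a non-outlier $f\in F_E$ covered by $s\in I^*_{\mathrm{rel}}$, either $s\in B$ (so $d(f,B)\le\guess$) or $b\in B\cap I_{a(s)}$ gives $d(f,b)\le\guess+4\guess+4\guess=9\guess$; and any other stream point $e$ was processed into some $I_c$ with $d(e,c)\le 4\guess$, so $d(e,B)\le 15\guess$, while all $I^*$-outliers total at most $z$. The main obstacle is the clean setup of~\Cref{lem:matstruct} so that pivot-covering centers, in-$F_E$ free-point covers, and out-of-$F_E$ free-point covers are handled uniformly, while carefully respecting the $2\guess$ versus $4\guess$ radii imposed by the algorithm's conditions.
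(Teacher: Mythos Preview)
Your proposal is correct and follows essentially the same approach as the paper: both isolate distinct optimum centers $s_c$ for the pivots via the McCutchen--Khuller separation, apply \Cref{lem:matstruct} to the collection of targets consisting of the $I_{a(s)}$'s together with the singletons $\{s\}\subseteq F_E$, and then read off the three distance bounds by triangle inequality exactly as you do. Your explicit arrangement of the inner loop so that $A_c\subseteq\bB(c,2\guess)$ makes precise a detail the paper's proof uses (it writes $d(c,x_c)\le 2\guess$) without commenting on the iteration order in \Cref{alg:1pmcwo}.
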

\begin{proof}
  Let $I^*$ be an optimum independent set of centers. By the same argument as in
  the proof of~\Cref{lem:mkkco}, the following claim is true.  For any
  $c\neq c'$, where $c,c' \in C_E$, $e \in A_c$, and $e'\in A_{c'}$, we have
  $d(e,e') > 2\guess$.  Now, we know that for $c \in C_E$, there exists
  $x_c \in A_c$ that has to be served by an optimum center, say $s_c$, because
  $|A_c| > z$.  By the earlier claim, for $c\neq c'$, we have
  $d(x_c,x_{c'}) > 2\guess$ implying that $s_c\neq s_{c'}$ and $\ell_E \le r$.
  Let $I^*_{C_E}=\{s_c : c \in C_E\}$ be the set of optimum centers that serve
  some $x_c\in A_c$ for $c\in C_E$.  None of the optimum centers in $I^*_{C_E}$
  can serve a point in $F_E$, because $d(s_c,F_E) > \guess$ for $c\in C_E$.
  This shows that all but $z$ points in $F_E$ have to be served by at most
  $r - \ell_E$ optimum centers with cost at most $\guess$.  Since $|I_c|\le r$
  for any $c$ in the variable $C$, size of $\bigcup_{c\in C}I_c$ is always
  bounded by $r^2$.  Also, the check on the size of $F$ ensures that
  $|F| \le (r + 1)z + 1$, so total number of points stored is at most
  $O(r^2 + rz)$ at any moment.

  When we first process a new point $e$ in the stream, we either try to add it
  to some $I_c$ or to $F$.  If $e$ is never removed from $F$, then $e\in F_E$,
  otherwise, we try to add it to some $I_c$.  The same argument applies to any
  $x \in I^*$, so if $x \notin F_E$, then we did try to add it to some $I_c$.
  For an $x \in I^*\setminus F_E$, let $a(x)\in C_E$ denote the pivot whose
  independent set $I_{a(x)}$ we tried to add $x$ to.

  By~\Cref{lem:matstruct}, for
  $\mathcal{A} := \{I_{a(x)} : x \in I^*\setminus F_E\} \cup \{\{x\}: x \in
  I^*\cap F_E\}$, there exists an independent set $B$ such that
  $|I \cap B| \ge 1$ for all $I\in\mathcal{A}$.  Since for $x \in I^*\cap F_E$
  the singleton $\{x\} \in \mathcal{A}$, the set $B$ must contain $\{x\}$.  For
  a free point $e$ served by an optimum center $s$ such that we tried to add $s$
  to some $I_c$, we have that
  $d(e,B) \le d(e,s) + d (s,c) + d(c,B)\le \guess + 4\guess +4\guess = 9\guess$,
  which means that $B$ serves all but $z$ points of $F_E$ with cost at most
  $9\guess$.  Now, we claim that for any point $e$ in the stream,
  $d(e,B) \le 15\guess$.  We just saw that if $e\in F_E$ is served by an optimum
  center, then $d(e,B) \le 9\guess$, so assume that $e\notin F_E$, that means
  there is a $c\in C_E$ such that $d(e,c) \le 4\guess$; denote this $c$ by
  $c(e)$.  See~\Cref{fig:mccost_wo}.  Let $s_{c(e)}$ be the optimum center that
  serves an $x_{c(e)} \in A_{c(e)}$ (recall that such a point exists because
  $|A_{c(e)}| > z$).  So $d(c(e),s_{c(e)}) \le 3\guess$, and $a(s_{c(e)})\in C_E$
  was the pivot such that $d(s_{c(e)},a(s_{c(e)})) \le 4\guess$.  Let $c'(e)$ be
  an arbitrary point in $I_{a(s_{c(e)})}\cap B$, whose existence is guaranteed
  by the property of $B$.  We have $d(a(s_{c(e)}),c'(e)) \le 4\guess$, because
  $c'(e)\in I_{a(s_{c(e)})}$.  Then by triangle inequality,
  \begin{align*}
    d(e,B) &\le d(e,c'(e))\\
           &\le d(e,c(e)) + d(c(e),x_{c(e)}) + d(x_{c(e)},s_{c(e)}) +
             d(s_{c(e)},a(s_{c(e)})) + d(a(s_{c(e)}),c'(e)) \\
           &\le 4\guess + 2\guess + \guess + 4\guess + 4\guess = 15\guess\,,
  \end{align*}
  hence, cost of $B$ is at most $15\guess$.

  For any $c\in C_E$, we can bound $d(c,B)$ in a similar way.  Let $s_c$ be the
  optimum center that serves an $x_c\in A_c$.  Define $a(s_c)$ to be the pivot
  such that $d(s_c,a(s_c)) \le 4\guess$.  Also, let $c'$ be the point in $B$
  such that $d(a_{s_c},c') \le 4\guess$.  This gives that
  $d(c,B) \le d(c,c') \le 11\guess$.  We already established that $B$ covers all
  but at most $z$ points of $F_E$ with cost at most $9\guess$.  The proof is now
  complete.
\end{proof}

\begin{figure}[!h]
\centering
\begin{tikzpicture}

\draw (0,0) circle (3cm);
\draw[fill=white,draw=none] (5,0) circle (3cm);
\draw (5,0) circle (3cm);

\fill  (0,0) circle (2pt);
\fill  (5,0) circle (2pt);

\fill  (-3,0) circle (2pt);
\fill  (1,0) circle (2pt);
\fill  (2,0) circle (2pt);
\fill  (5,0) circle (2pt);
\fill  (8,0) circle (2pt);

\draw[-{Latex[length=3mm, width=2mm]}] (-3,0) -- (0,0);
\draw[-{Latex[length=3mm, width=2mm]}] (0,0) -- (-3,0);

\draw[-{Latex[length=3mm, width=2mm]}] (1,0) -- (0,0);
\draw[-{Latex[length=3mm, width=2mm]}] (0,0) -- (1,0);

\draw[-{Latex[length=3mm, width=2mm]}] (1,0) -- (2,0);
\draw[-{Latex[length=3mm, width=2mm]}] (2,0) -- (1,0);

\draw[-{Latex[length=3mm, width=2mm]}] (5,0) -- (2,0);
\draw[-{Latex[length=3mm, width=2mm]}] (2,0) -- (5,0);

\draw[-{Latex[length=3mm, width=2mm]}] (5,0) -- (8,0);
\draw[-{Latex[length=3mm, width=2mm]}] (8,0) -- (5,0);

\draw (-3,0) node[anchor=north east] {$e$};

\draw (-1.5,0) node[anchor=south] {$4\guess$};

\draw (0,0) node[anchor=north] {$c(e)$};

\draw (0.5,0) node[anchor=south] {$2\guess$};

\draw (1,0) node[anchor=north] {$x_{c(e)}$};

\draw (1.5,0) node[anchor=south] {$\guess$};

\draw (2,0) node[anchor=south west] {$s_{c(e)}$};

\draw (3.5,0) node[anchor=south] {$4\guess$};

\draw (5,0) node[anchor=north] {$a(s_{c(e)})$};

\draw (6.5,0) node[anchor=south] {$4\guess$};

\draw (8,0) node[anchor=north west] {$c'(e)$};

\end{tikzpicture}
\caption{To see how to bound the cost of the independent set $B$, let $e$ be any
  point in the stream, $c(e)\in C_E$ be the pivot close to $e$, \,\, $x_{c(e)}$
  be the point in the support $A_{c(e)}$ of $c(e)$ that an optimum center
  serves, $s_{c(e)}$ be the optimum center that serves $x_{c(e)}$,\,\,
  $a(s_{c(e)})\in C_E$ be the pivot close to $s_{c(e)}$, and $c'(e)$ be a point
  in $B$ that covers $a(s_{c(e)})$.}
\label{fig:mccost_wo}
\end{figure}
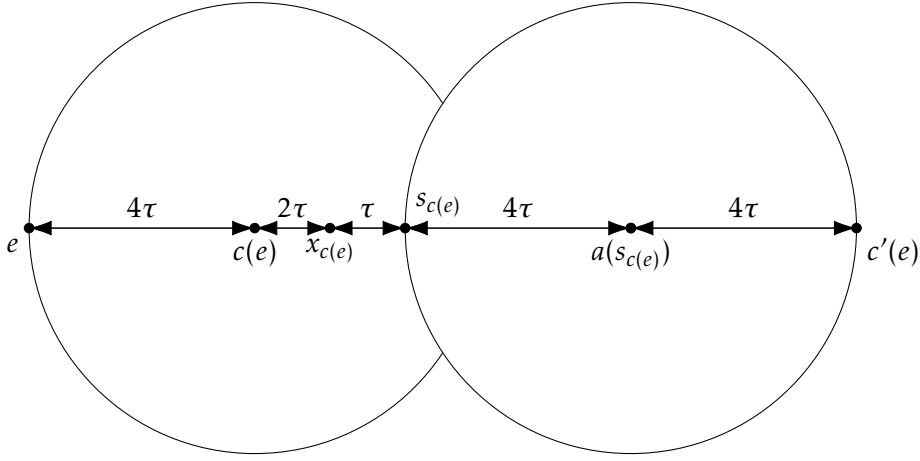


\begin{theorem}
  There is an efficient $(51+\eps)$-approximation one-pass algorithm for matroid
  center with $z$ outliers that stores at most $O((r^2+rz)\log \Delta/\eps)$ points.  With
  a brute force algorithm, one can get a $(15+\eps)$-approximation.
  \label{thm:mczoalgo}
\end{theorem}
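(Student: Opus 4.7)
The plan is to run \Call{matroid-center-z-outliers}{$\guess$,flag} in parallel for $O(\log\Delta/\eps)$ guesses $\guess$ of $\opt$, geometrically spaced (by factor $1+\eps/15$ for the brute-force branch and $1+\eps/51$ for the efficient branch) over a range that contains $\opt$; at termination I return the output of the smallest non-aborting guess. By \Cref{lem:existence_wo}, every $\guess\ge\opt$ admits the independent set $B$ whose existence the algorithm needs to avoid aborting, so the returned copy corresponds to some $\guess\in[\opt,(1+\eps/15)\opt)$ in the brute-force case and some $\guess\in[\opt,(1+\eps/51)\opt)$ in the efficient case.

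For the space bound, each active instance stores at most $O(r^2+rz)$ points by the first part of \Cref{lem:existence_wo}, and there are $O(\log\Delta/\eps)$ active guesses, giving the claimed $O((r^2+rz)\log\Delta/\eps)$ overall. (This can be sharpened via the stream-strapping technique of \Cref{sec:handling-guesses}, but that is not needed for the theorem as stated.)

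For the brute-force branch, \Cref{lem:existence_wo} directly produces the set $C'_B$ required by the algorithm: it has cost at most $11\guess$ on $C_E$ and covers all but $z$ points of $F_E$ at cost $\le 9\guess$. Every stream element that was forgotten lies within $4\guess$ of some pivot in $C_E$, so the triangle inequality lifts the cost on $C_E$ to a global cost of $\le 11\guess+4\guess=15\guess\le(15+\eps)\opt$ on the whole stream outside the same $z$ outliers, proving the brute-force guarantee.

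For the efficient branch, I would invoke Harris et al.'s offline $3$-approximation algorithm on the restricted instance whose candidate centers lie in $F_E\cup\bigcup_{c\in C_E}(A_c\cup I_c)$ and whose points to cover are $C_E\cup F_E$ (with $z$ outliers allowed inside $F_E$); \Cref{lem:existence_wo} supplies a low-cost feasible witness on this restricted instance. The resulting $C'$ then has cost at most $47\guess$ on $C_E$ and at most $45\guess$ on $F_E$ minus $z$ outliers, and adding $4\guess$ for the forgotten points (which are all within $4\guess$ of some pivot in $C_E$) yields a global bound of $\le 51\guess\le(51+\eps)\opt$. The main obstacle I expect here is precisely this reduction to Harris et al.: the witness $B$ controls the true restricted optimum, but routing the generic $3$-approximation through the summary---keeping the full $z{+}1$ support points in each $A_c$ so that the outlier constraint can be honored separately on $C_E$ and $F_E$, and matching the $47\guess/45\guess$ bounds stated in the algorithm box---requires a careful accounting, and that accounting is the part of the proof I would write out in most detail.
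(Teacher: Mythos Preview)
Your overall plan and the brute-force branch are correct and match the paper's proof essentially verbatim.

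For the efficient branch, your instinct that the reduction to Harris et al.\ is the crux is right, but the setup you describe is more complicated than what is actually needed, and as stated it does not quite fit. You propose an asymmetric instance whose clients are $C_E\cup F_E$ with outliers permitted only inside $F_E$; Harris et al.'s algorithm is for the standard symmetric problem and does not directly handle that. The paper sidesteps this entirely: it simply runs the offline $3$-approximation on the summary $F_E\cup\bigcup_{c\in C_E}(A_c\cup I_c)$ as an ordinary matroid-center-with-$z$-outliers instance (every summary point is both a client and a potential center). Since \Cref{lem:existence_wo} gives an independent set $B$ of cost at most $15\guess$ on the whole stream, in particular on the summary, the optimum of this restricted instance is at most $15\guess$, so the $3$-approximation returns $C'$ of cost at most $45\guess$ outside at most $z$ summary points. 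The key observation---which you do mention but do not fully exploit---is that each $A_c$ has $z+1$ points, so at least one point of every $A_c$ is covered by $C'$ within $45\guess$; since $A_c\subseteq\bB(c,2\guess)$, the triangle inequality gives $d(c,C')\le 47\guess$. No separate outlier budget for $C_E$ versus $F_E$ is needed; the $A_c$'s absorb the outliers automatically. With that one-line argument in place, the ``careful accounting'' you anticipate collapses.
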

\begin{proof}
  We run $O(\log \Delta/\eps)$ parallel copies of
  \Call{matroid-center-z-outliers}{$\guess$, flag} and return the output of the
  copy for the smallest unaborted guess.  We claim that the copy corresponding
  to guess $\guess'\in [\opt,(1+\eps/50)\opt)$, call it $\inst(\guess')$, will
  not abort.  Denote by $C_E$, $F_E$, and $(I_c)_{c\in C_E}$ contents of the
  corresponding variables in $\inst(\guess')$ at the end of the stream (we will
  not abort mid-stream because $\guess'\ge\opt$).

  By \Cref{lem:existence_wo}, $F_E \cup \bigcup_{c\in C_E}I_c$ contains a
  solution that has cost $11\guess'$ with respect $C_E$ and $9\guess'$ with
  respect to all but at most $z$ of $F_E$.  These checks can be performed by the
  brute force algorithm.  Since any instance for guess $\guess$ forgets only
  those points within distance $4\guess$ of its pivots, the brute force
  algorithm outputs a $(15+\eps)$-approximation.

  By~\Cref{lem:existence_wo}, there exists a solution of cost at most
  $15\guess'$, and the efficient $3$-approximation algorithm for matroid center
  with $z$ outliers will return a solution $C'$ with cost at most $45\guess'$.
  Note that $C'$ has to cover at least one point from $A_c$ for each $c\in C_E$,
  hence $d(c,C')\le 47\guess'$.  Since we forget points only within distance
  $4\guess'$ of $C_E$, we get the desired approximation ratio.
\end{proof}


\section{Handling the Guesses}
\label{sec:handling-guesses}

We extend the ideas of Guha~\cite{Guha09} and McCutchen and
Khuller~\cite{McCutchenK08} to run $O(\log(1/\eps)/\eps)$ active guesses.
Although, to make this idea work for matroids, we do need a property of matroids
(see~\Cref{lem:nspanso}).  The way to do this is to start with a lower bound $R$
on the optimum and spawn instances, which we call \emph{original} instances,
$\inst(\guess)$ for guesses
$\guess = R, R(1+\eps), \ldots, R(1+\eps)^\beta = R\alpha/\eps$, for some
$\alpha$ that depends on the basic algorithm that we use, e.g., for matroid
center, we will use $\alpha=2+\eps$.  When a guess $\guess'$ fails, we replace
an instance $\inst = \inst(\guess)$ for $\guess \le \guess'$ with a new
instance, which we call its \emph{child} instance,
$\inst_N = \inst(\guess(1+\eps)^\beta)$.  In the new instance $\inst_N$, we
treat the summary that we maintained for $\inst(\guess)$ as the initial stream.
Since the new guess in $\inst_N$ is about $1/\eps$ times larger than the old
guess in $\inst$, the distance between a point that we forgot and the summary
stored by $\inst$ is about $\eps$ times the new guess.  Therefore, the cost
analysis does not get much affected for a correct guess.  If we forgot an
optimum center, a nearby point in the summary can act as its replacement.  This
statement is obvious for a uniform matroid, because all points are treated the
same way within the matroid, but it is not true for general matroids; in fact,
as exhibited by our lower bound, it is not true even for partition matroids.  So
with each point in the summary, we pass to the new instance an independent set
$I_o$.  The following simple lemma shows that if an optimum center $x$ is in the
span of $I_o$, and if we construct $I_c$ for a new pivot $c$ such that
$I_o\subseteq \spn(I_c)$, then $I_c$ also spans the optimum center.

\begin{lemma}
  Let $I$ and $J$ be independent sets of a matroid such that
  $J\subseteq\spn(I)$.  If $e \in \spn(J)$, then $e\in\spn(I)$.
  \label{lem:nspanso}
\end{lemma}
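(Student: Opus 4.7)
The plan is to prove this via two applications of submodularity (equation (4)). I first note the rank characterization of span implicit in the paper's definition: for any set $A$ and element $x$, we have $x \in \spn(A)$ iff $\rank(A \cup \{x\}) = \rank(A)$. This equivalence will be the workhorse, since both the hypothesis and the conclusion of the lemma translate directly into statements about ranks.

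The main intermediate step is the auxiliary claim that $\rank(I \cup J) = \rank(I)$ whenever $J \subseteq \spn(I)$. I would prove this by induction on $|J|$; the base case $|J|=0$ is trivial. For the inductive step, fix any $j \in J$ and let $J' = J \setminus \{j\}$. The inductive hypothesis gives $\rank(I \cup J') = \rank(I)$, and $j \in \spn(I)$ gives $\rank(I \cup \{j\}) = \rank(I)$. Now apply submodularity to the sets $I \cup J'$ and $I \cup \{j\}$, whose union is $I \cup J$ and whose intersection is $I$ (since $j \notin J'$):
\[
  \rank(I \cup J) + \rank(I) \le \rank(I \cup J') + \rank(I \cup \{j\}) = 2\,\rank(I).
\]
Combined with monotonicity of rank, this yields $\rank(I \cup J) = \rank(I)$, as claimed.

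To conclude, I would assume without loss of generality that $e \notin I \cup J$ (if $e \in I$ the claim is immediate, and if $e \in J$ then $e \in \spn(I)$ follows from the auxiliary claim applied to $J \cup \{e\} = J \subseteq \spn(I)$). Then apply submodularity once more to $I \cup J$ and $J \cup \{e\}$: their union is $I \cup J \cup \{e\}$ and, because $e \notin I \cup J$, their intersection is exactly $J$. Using the auxiliary claim and the hypothesis $e \in \spn(J)$, this gives
\[
  \rank(I \cup J \cup \{e\}) + \rank(J) \le \rank(I \cup J) + \rank(J \cup \{e\}) = \rank(I) + \rank(J),
\]
so $\rank(I \cup J \cup \{e\}) \le \rank(I)$. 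Sandwiching with monotonicity, $\rank(I) \le \rank(I \cup \{e\}) \le \rank(I \cup J \cup \{e\}) \le \rank(I)$, hence $\rank(I \cup \{e\}) = \rank(I)$, i.e., $e \in \spn(I)$. There is no genuine obstacle here; the main subtlety is just choosing the two sets in each submodularity application so that their intersection collapses cleanly (to $I$ in the inductive step, to $J$ at the end), which is what lets the rank terms cancel in the desired direction.
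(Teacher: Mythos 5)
Your proof is correct and follows essentially the same route as the paper's: both arguments reduce to the chain $\rank(I)=\rank(I\cup J)=\rank(I\cup J\cup\{e\})\ge\rank(I\cup\{e\})$ and conclude by monotonicity. The only difference is that the paper asserts the two intermediate rank equalities as standard consequences of the definition of span, while you derive them explicitly via submodularity, which is a fine (if more verbose) way to fill in the same steps.
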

\begin{proof}
  Let $\rank(I) = q$.  Towards a contradiction, let $\rank(I\cup \{e\}) = q+1$.
  Since $J\subseteq\spn(I)$, $\rank(I\cup J) = q$.  Now, $e \in \spn(J)$, so
  $\rank(I\cup J \cup \{e\}) = q$, i.e., $\rank(I\cup \{e\}) \le q$, which gives
  us the desired contradiction. 
\end{proof}

\subsection{A Smaller Space Algorithm for Matroid Center}
We modify the function \Call{MatroidCenter}{$\guess$,flag} from earlier to
accept a starting stream and an independent set for each point in the starting
stream: \Call{MatroidCenter}{$\guess$, $C_o$, $(J_{c_o})_{c_o\in C_o}$,flag}.
Before processing any new points in the stream we process the points in $C_o$ as
follows.  When processing a $c_o\in C_o$, if $d(c_o,C) \le 2\guess$, try to add
points in $J_{c_o}$ to $I_c$.  Otherwise create a new pivot $c$ in $C$ and
initialize $I_c = J_{c_o}$.  Once $C_o$ is processed, we continue with the
stream and work exactly as in \Call{MatroidCenter}{$\guess$}.  We give complete
pseudocode in~\Cref{alg:1pmcss}.

\begin{algorithm}[!h]
  \begin{algorithmic}[1]

    \State Let $R$ be the minimum distance for some two points in the first
    $r+1$ points in the stream.

    \For{$\guess\in \{R, R(1+\eps),\ldots, R(1+\eps)^\beta = (2+\eps)R/\eps\}$
      in parallel}

    \State \Call{MatroidCenter}{$\guess$, $\emptyset$, $\emptyset$}.

    \If{an instance with guess $\guess$ is aborted}

    \For{all active $\inst(\guess')$ with guess $\guess' \le \guess$, current
      pivots $C_o$, and independent sets $(J_{c_o})_{c_o\in C_o}$}

    \State Replace it with the child instance
    \Call{MatroidCenter}{$\guess'(1+\eps)^\beta$, $C_o$,
      $(J_{c_o})_{c_o\in C_o}$, flag}.

    \EndFor

    \EndIf

    \EndFor

    \State Return the set $C'$ of centers returned by the active instance with
    the smallest guess. \label{line:ss:return}

    \State \mbox{}
    
    \Function{MatroidCenter}{$\guess$, $C_o$, $(J_{c_o})_{c_o\in C_o}$, flag}
    


    \State $C\gets \emptyset$.

    \For{each point $c_o$ in $C_o$}  \label{line:ss:ngs}

    \If{$\exists c\in C$ such that $d(c_o,c)\le2 \guess$ (pick arbitrary such
      $c$ if there are several)}

    \For{$e_o \in J_{c_o}$}

    \If{$I_c\cup\{e_o\}$ is independent}

    \State $I_c \gets I_c\cup\{e_o\}$.  

    \EndIf

    \EndFor

    \Else

    \State $C \gets C\cup\{c_o\}$.

    \State $I_{c_o} \gets J_{c_o}$. \label{line:ss:nge}

    \EndIf

    \EndFor

    \State \# Processing of the old pivots finished, continue with the actual
    stream.

    \For{each point $e$ in the stream}

    \If{there is a pivot $c\in C$ such that $d(e,c)\le2 \guess$ (pick arbitrary
      such $c$)}

    \If{$I_c\cup\{e\}$ is independent}

    \State $I_c \gets I_c\cup\{e\}$.  \label{line:ss:scadd}

    \EndIf

    \ElsIf{$|C| = r$} \label{line:ss:sizecheck} \Comment{We cannot have more
      pivots than the rank.}

    \State Abort. \Comment{Because $C\cup \{e\}$ acts as a certificate that the
      guess is incorrect.}

    \Else

    \State $C \gets C\cup\{e\}$.  \Comment{Make $e$ a pivot.}

    \State If $\{e\}$ is not a loop, $I_e \gets \{e\}$, else
    $I_e\gets\emptyset$.

    \EndIf

    \EndFor

    \If{flag = ``brute force''}

    \State Find an independent set $C'_B$ in $\bigcup_{c\in C}I_c$ such that
    $d(c,C'_B)\le (5+2\eps)\guess$ for $c \in C$.

    \State If such $C'_B$ does not exist, then abort, else return
    $C'_B$. \label{line:ss:bf}

    \EndIf

    \State \Return \Call{EfficientMatroidCenter}{$(5+2\eps)\guess$, $C$,
      $(I_c)_{c\in C}$, $\cM$}.

    \EndFunction
    \caption{One pass algorithm for matroid center with smaller space.}
    \label{alg:1pmcss}
  \end{algorithmic}
\end{algorithm}
For an instance $\inst(\guess)$ let $C_o(\guess)$ be the initial summary and
$\cJ(\guess)$ be the collection of independent sets that we passed to it, and
let $E(\guess)$ be the part of the actual stream that it processed.  Also, let
$\inst(\guess_o)$ be the instance for $\guess_o = \eps\guess/(2+\eps)$ from
which $\inst(\guess)$ was spawned.
\begin{lemma}
  Let $e$ be a point that arrived before the substream $E(\guess)$. Then $e$ has
  a nearby \emph{representative} $\rho_e \in C_o(\guess)$ such that
  $d(e,\rho_e) \le \eps\guess$ and also the independent set $J_{\rho_e}$
  corresponding to $\rho_e$ spans $e$.
  \label{lem:club}
\end{lemma}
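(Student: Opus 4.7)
The plan is to prove this by induction on the number of ancestors of $\inst(\guess)$ in the spawn chain, where an \emph{original} instance (for which $C_o(\guess)=\emptyset$) has zero ancestors. The base case is vacuous, because no point arrives before $E(\guess)$ when $C_o(\guess)=\emptyset$. For the inductive step, let $\inst(\guess_o)$ be the parent of $\inst(\guess)$, so that $\guess_o=\eps\guess/(2+\eps)$; at the moment of the spawn, $C_o(\guess)$ equals the pivot set $C$ maintained by $\inst(\guess_o)$, and for each $c\in C_o(\guess)$ the independent set $J_c$ in $\inst(\guess)$ equals $I_c$ in $\inst(\guess_o)$ at that moment. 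Since $I_c$ is only ever grown by the algorithm, $J_c$ in $\inst(\guess)$ spans at least whatever $I_c$ spanned earlier in $\inst(\guess_o)$.

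Given $e$ that arrives before $E(\guess)$, I split into two cases. If $e\in E(\guess_o)$, then $\inst(\guess_o)$ processed $e$ directly: either $e$ became a pivot, in which case $\rho_e:=e$ works with $d(e,\rho_e)=0$ and $J_e\supseteq\{e\}$; or $e$ lay within $2\guess_o$ of some pivot $c$, in which case the algorithm tried to add $e$ to $I_c$, so $e\in\spn(I_c)$, and the identity $2\guess_o=2\eps\guess/(2+\eps)\le\eps\guess$ gives the required distance bound.

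If instead $e$ arrived before $E(\guess_o)$, the induction hypothesis applied to $\inst(\guess_o)$ yields $\rho_e'\in C_o(\guess_o)$ with $d(e,\rho_e')\le\eps\guess_o$ whose associated independent set $J^{(o)}_{\rho_e'}$ spans $e$ (superscript $(o)$ denoting sets in $\inst(\guess_o)$). The point $\rho_e'$ is processed in the initial loop of $\inst(\guess_o)$ over $C_o(\guess_o)$: either $\rho_e'$ is promoted to a pivot, in which case $\rho_e:=\rho_e'\in C_o(\guess)$ works since $I_{\rho_e'}$ is initialized to $J^{(o)}_{\rho_e'}$ and only grows; or $\rho_e'$ falls within $2\guess_o$ of an existing pivot $c$, and the loop tries to add every point of $J^{(o)}_{\rho_e'}$ into $I_c$, forcing $J^{(o)}_{\rho_e'}\subseteq\spn(I_c)$. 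In this last case set $\rho_e:=c$; the triangle inequality yields $d(e,c)\le\eps\guess_o+2\guess_o=(2+\eps)\guess_o=\eps\guess$.

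The main obstacle, and the only place where nontrivial matroid structure is used, is the last subcase: I need to conclude $e\in\spn(I_c)$ from $e\in\spn(J^{(o)}_{\rho_e'})$ and $J^{(o)}_{\rho_e'}\subseteq\spn(I_c)$, with $J^{(o)}_{\rho_e'}$ and $I_c$ both independent. This is exactly the content of \Cref{lem:nspanso}, so the reduction to the induction hypothesis is clean. Since $I_c$ only grows in $\inst(\guess_o)$ after this step and equals $J_c$ once $\inst(\guess)$ is spawned, the span relation $e\in\spn(J_c)$ persists, completing the induction.
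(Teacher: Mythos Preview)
Your proof is correct and follows essentially the same approach as the paper's: induction on the number of ancestors, with the same two-case split (according to whether $e\in E(\guess_o)$ or $e$ arrived earlier), the same triangle-inequality computation $d(e,c)\le\eps\guess_o+2\guess_o=(2+\eps)\guess_o=\eps\guess$, and the same invocation of \Cref{lem:nspanso} to propagate the span relation through the chain of independent sets. Your version is slightly more explicit in separating out the subcases where $e$ (or $\rho_e'$) itself becomes a pivot; the paper folds these into the general statement ``we tried to add it to $I_{a(e)}$''. One tiny nit: when $e$ becomes a pivot you write $J_e\supseteq\{e\}$, but if $e$ is a loop the algorithm sets $I_e\gets\emptyset$; this is harmless since a loop lies in $\spn(\emptyset)$ anyway, so the conclusion $e\in\spn(J_e)$ still holds.
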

\begin{proof}
  We prove this claim by induction on the number of ancestors.  For an original
  instance, the claim holds trivially, because no point arrived before.
  Otherwise, there are two cases: either $e\in E(\guess_o)$ or $e$ arrived
  before $E(\guess_o)$.  If $e\in E(\guess_o)$, then by the logic of the
  algorithm, there exists $a(e) \in C_o(\guess)$ such that
  $d(e,a(e))\le 2\guess_o = 2 \eps\guess/(2+\eps)\le \eps \guess$, and also we
  tried to add $e$ to $I_{a(e)}$ (that became $J_{a(e)}$ for the next instance
  $\inst(\guess)$).  Otherwise, by induction hypothesis, there is a point
  $e'\in C_o(\guess_o)$ such that $d(e,e') \le \eps\guess_o$ and $J_{e'}$ spans
  $e$.  Now, let $\rho_{e'}\in C_o(\guess)$ be such that
  $d(e',\rho_{e'})) \le 2 \guess_o$ (such $\rho_{e'}$ must exist by logic of the
  algorithm).  Using triangle inequality and the above inequality that
  $d(e,e') \le \eps\guess_o$, we get
  \[
    d(e,\rho_{e'}) \le d(e,e') + d(e',\rho_{e'}) \le \eps\guess_o + 2\guess_o =
    (2+\eps)\guess_o = (2+\eps)\frac{\eps\guess}{(2+\eps)} = \eps\guess\,.
  \]
  Moreover, in the instance $\inst(\guess_o)$, we tried to add all points in
  $J_{e'}$ to $I_{\rho_{e'}}$, so by~\Cref{lem:nspanso},
  $e\in\spn(I_{\rho_{e'}})$ (see that $I_{\rho_{e'}}$ became $J_{\rho_{e'}}$ for
  the next instance $\inst(\guess)$), which proves the claim.
\end{proof}
\begin{theorem}\label{thm:guessmc}
  There is an efficient $((17+7\eps)(1+\eps))$-approximation one-pass algorithm
  for matroid center that stores at most $O(r^2\lobebe)$
  points and has
  total running time $O((nr + r^{3.5})\lobebe + r^2\ldbe)$.  With a brute force algorithm, one can get a
  $((7+3\eps)(1+\eps))$-approximation in total running time $O((nr + r^{2r+2})\lobebe + r^2\ldbe)$.
\end{theorem}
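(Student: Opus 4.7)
The plan is to verify, via the standard stream-strapping framework, that some instance $\inst^* := \inst(\guess^*)$ with $\guess^* \in [\opt, (1+\eps)\opt)$ is active at the end and produces the claimed approximation, while the number of concurrent active instances stays $O(\lobebe)$. For the space bound, I would observe that active guesses always span a ratio of at most $(1+\eps)^\beta = (2+\eps)/\eps$, so at most $\beta + 1 = O(\lobebe)$ instances are simultaneously alive; each stores $O(r^2)$ points by \Cref{lem:mcspace}, giving the claimed total.

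The heart of correctness is a \emph{diameter bound} that I would establish by induction on the generation of the instance: every $p \in I_c$ in any active instance $\inst(\guess)$ satisfies $d(p,c) \le (2+\eps)\guess$. The base case follows from the condition $d(p,c) \le 2\guess$ imposed on direct stream additions. For the inductive step, any point $p$ inherited through an old pivot $c_o \in C_o(\guess)$ that matches a pivot $c$ satisfies $d(p,c) \le d(p,c_o) + d(c_o,c) \le (2+\eps)\guess_o + 2\guess = \eps\guess + 2\guess$, using $\guess_o = \eps\guess/(2+\eps)$. Combined with \Cref{lem:club}, which provides, for every pre-stream point $e$, a representative $\rho_e \in C_o(\guess)$ within $\eps\guess$ of $e$ whose independent set $J_{\rho_e}$ spans $e$, I have the machinery to mirror the proof of \Cref{lem:existence} in the stream-strapped setting.

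Concretely, for each optimum center $x \in I^*$, $\rho_x$ is processed by $\inst^*$ and ends up either as a pivot itself or matched to an existing pivot $a(x) \in C_E$; in either case every element of $J_{\rho_x}$ is tested for addition to $I_{a(x)}$, so $J_{\rho_x} \subseteq \spn(I_{a(x)})$, whence \Cref{lem:nspanso} yields $x \in \spn(I_{a(x)})$. \Cref{lem:matstruct} then produces an independent set $B \subseteq \bigcup_{c \in C_E} I_c$ meeting each $I_{a(x)}$. The five-hop chain from a stream point $e$ to $B$ becomes
\[
  d(e,B) \le (2+\eps)\guess^* + \guess^* + (2+\eps)\guess^* + (2+\eps)\guess^* = (7+3\eps)\guess^*,
\]
where the stream-to-pivot hop $d(e,c(e))$ and the optimum-to-pivot hop $d(s_{c(e)}, a(s_{c(e)}))$ each absorb the $\eps\guess^*$ slack from \Cref{lem:club}, and the pivot-to-$B$-element hop $d(a(s_{c(e)}), c'(e))$ is $(2+\eps)\guess^*$ by the diameter bound. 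The analogous calculation gives $d(c,B) \le (5+2\eps)\guess^*$ for $c \in C_E$, which is exactly the threshold tested by the brute-force step; hence it succeeds and returns a $(7+3\eps)(1+\eps)\opt$ solution. For the efficient variant, feeding $\offguess = (5+2\eps)\guess^*$ into \Cref{thm:chen} gives $C'$ with $d(c,C') \le 3\offguess = (15+6\eps)\guess^*$, and combining with the stream-to-pivot hop yields $d(e,C') \le (2+\eps)\guess^* + (15+6\eps)\guess^* = (17+7\eps)\guess^* \le (17+7\eps)(1+\eps)\opt$.

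The main obstacle is ensuring that the spanning property survives repeated stream-strapping; this is precisely why the algorithm must try to add \emph{every} element of $J_{c_o}$ into $I_c$ rather than merely carrying along the pivot $c_o$, since \Cref{lem:nspanso} then lets us lift the spanning relation inductively across generations. The running-time bound follows by charging $O(r)$ per-point work across each of the $O(\lobebe)$ active instances and $n$ stream elements, the post-processing cost of each such instance ($O(r^{3.5})$ via Cunningham's matroid-intersection algorithm for the efficient variant, or $O(r^{2r+2})$ for brute force), and the $O(r^2)$ work to repopulate each spawned child over the $O(\ldbe)$ total spawning events.
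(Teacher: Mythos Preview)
Your argument tracks the paper's proof closely: the diameter bound $d(p,c)\le(2+\eps)\guess$ for $p\in I_c$ is exactly what the paper extracts from \Cref{lem:club}, and your use of \Cref{lem:nspanso} and \Cref{lem:matstruct} to exhibit $B$ with $d(c,B)\le(5+2\eps)\guess^*$ mirrors the paper's inequality~\eqref{eq:3} and the surrounding discussion. The space and running-time accounting are also as in the paper.

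There is, however, a structural gap. You analyze only the instance $\inst^*=\inst(\guess^*)$ with $\guess^*\in[\opt,(1+\eps)\opt)$ and conclude that \emph{it} ``returns a $(7+3\eps)(1+\eps)\opt$ solution.'' But the algorithm on \Cref{line:ss:return} returns the output of the \emph{smallest} active unaborted guess $\guess'$, which may be strictly smaller than $\guess^*$ (and even smaller than $\opt$). Your analysis of $\inst^*$ only establishes that $\inst^*$ does not abort, hence $\guess'\le\guess^*<(1+\eps)\opt$; it says nothing directly about the quality of $\inst(\guess')$'s output. The paper handles this by splitting the argument in two: (i) for \emph{whatever} instance $\inst(\guess')$ actually returns, the algorithmic check on \Cref{line:ss:bf} (respectively the guarantee of \Cref{thm:chen}) forces $d(c,C'_B)\le(5+2\eps)\guess'$ (respectively $d(c,C')\le(15+6\eps)\guess'$) with respect to its own pivot set, and your diameter bound then gives cost $(7+3\eps)\guess'$ (respectively $(17+7\eps)\guess'$) over the whole stream; (ii) separately, the existence of $B$ for $\inst(\guess^*)$ shows $\guess'\le(1+\eps)\opt$. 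You have all the ingredients for both halves, but you should make the decoupling explicit rather than attributing the final bound to $\inst^*$ itself.
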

\begin{proof}
  Space usage is easy to analyze.  At any time, we have at most
  $O(\log_{1+\eps}(1/\eps)) = O(\lobebe)$ active instances and each instance stores at most
  $O(r^2)$ points.

  Running time analysis for both, brute force and efficient algorithm, is the
  same as that in the proof of \Cref{thm:main}, except the additional $r^2\ldbe$
  term appears because it takes time $O(r^2)$ to initialize a new guess starting
  \Cref{line:ss:ngs} till \Cref{line:ss:nge}, and there can be at most
  $O(\ldbe)$ possible guesses.

  Consider the instance $\inst(\guess')$ for which we returned
  on~\Cref{line:ss:return} in~\Cref{alg:1pmcss}, and suppose the outputs were
  $C'$ or $C'_B$ (depending on ``flag'').  We note that some active copy
  \emph{will} return, because $\guess$ cannot keep on increasing indefinitely.
  E.g., consider $\guess$ larger than the maximum distance between any two
  points.  Let $C_E$ be the contents of the variable $C$ in $\inst(\guess')$ at
  the end of the stream.  Then we know that costs of $C'_B$ and $C'$ are at most
  $(5+2\eps)\guess'$ and $(15+6\eps)\guess'$ with respect to $C_E$ due to the
  check that we do on~\Cref{line:ss:bf} and by~\Cref{thm:chen} for
  \Call{EfficientMatroidCenter}{}.  By \Cref{lem:club}, any point that arrived
  before $E(\guess')$ is within distance $\eps\guess'$ of $C_o(\guess')$, and
  each point in $C_o(\guess')$ is within distance $2\guess'$ of $C_E$, which
  shows that costs of $C'_B$ and $C'$ are at most $(7+3\eps)\guess'$ and
  $(17+7\eps)\guess'$ with respect to the whole stream (by triangle inequality).
  Next, we show that $\guess'\le (1+\eps)\opt$, and that will finish the proof.

  Consider the guess $\guess \in (\opt,(1+\eps)\opt]$.  If $\guess$ was never
  active, that means $\guess'\le \opt$, and we are done.  Otherwise, $\guess$
  was active, and we will prove that it was not aborted.  Since $\guess\le\opt$,
  we will not abort mid-stream in $\inst(\guess)$, so let $C_E$ be the set of
  pivots at the end of the stream in $\inst(\guess)$.  We will show that there
  is an independent set $B$ such that cost of $B$ with respect to $C_E$ is at
  most $(5+2\eps)\guess$.  By \Cref{line:ss:bf} and by~\Cref{thm:chen} for
  \Call{EfficientMatroidCenter}{}, this would imply that $\inst(\guess)$ cannot
  abort.

  From here on, the proof follows that of~\Cref{lem:existence}.  Let $c\in C_E$.
  Denote by $s_c$ the optimum center that serves it, so $d(c,s_c) \le \guess$.
  If $s_c \in E(\guess)$, then $s_c \in\spn(I_{c'})$ for some $c'\in C_E$ and
  $d(s_c,c') \le 2\guess$.  Otherwise, $s_c$ arrived before $E(\guess)$.  Let
  $\rho_{s_c}$ be the representative of $s_c$ whose existence is guaranteed by
  \Cref{lem:club}, so $d(s_c,\rho_{s_c})\le\eps\guess$.  Then let $c' \in C_E$
  be such that $d(\rho_{s_c},c')\le 2\guess$ and $J_{\rho_{s_c}}$ is spanned by
  $I_{c'}$.  Thus, by triangle inequality
  \begin{equation}
    \label{eq:3}
    d(s_c,c')\le d(s_c,\rho_{s_c}) + d(\rho_{s_c},c') \le \eps\guess + 2\guess =
    (2+\eps)\guess\,,
  \end{equation}
  and by~\Cref{lem:nspanso}, $s_c$ is spanned by $I_{c'}$.  Denote by
  $\mathcal{A}$ the collection of such $I_{c'}$'s.  Now,
  by~\Cref{lem:matstruct}, there exists an independent set $B$ such that
  $|I \cap B| \ge 1$ for all $I\in\mathcal{A}$.  Pick $c_p$ from $I_{c'}\cap B$.
  Either $c_p\in E(\guess)$ or it arrived before.  In any case, again using
  \Cref{lem:club}, we have $d(c_p,c')\le(2+\eps)\guess$ (we use this below),
  and
  \begin{itemize}
    \item $d(c,s_c) \le \guess$, because $s_c$ is the optimum center that covers
    $c$,
    \item $d(s_c,c') \le (2+\eps)\guess$, by~Inequality~\eqref{eq:3}, and
    \item $d(c',c_p)\le (2+\eps)\guess$.
  \end{itemize}
  Thus, by triangle inequality, $d(c,B) \le (5+2\eps)\guess$.  So
  $\inst(\guess)$ will not abort.  This finishes the proof.
\end{proof}

This technique also gives a better two-pass algorithm.
\begin{theorem}
  There is an efficient $(3+\eps)$-approximation two-pass algorithm for matroid
  center that stores at most $O(r^2\lobebe)$ points and has
  total running time $O((nr + r^{3.5})\lobebe + r^2\ldbe)$.
  \label{alg:two-pass-algorithm-better}
\end{theorem}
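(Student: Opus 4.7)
The plan is to run the first pass exactly as in \Cref{alg:1pmcss}, but use it only to maintain, for each active guess $\guess$, the pivot set $C$ together with its inherited independent sets $(I_c)_{c\in C}$ used for parent-to-child span propagation; no solution is produced from the first pass. By its end, each surviving active instance $\inst(\guess)$ owns a stable pivot set $C_E$ whose pairwise distances exceed $2\guess$. In the second pass, for each such instance I initialize fresh independent sets $I'_c\gets\{c\}$ (or $\emptyset$ if $\{c\}$ is a loop) for $c\in C_E$, and for each stream point $e$, if some $c\in C_E$ satisfies $d(e,c)\le\guess$ I try to add $e$ to $I'_c$ while preserving its independence. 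Finally, I run matroid intersection between the partition matroid on $\{I'_c:c\in C_E\}$ with capacities $1$ and $\cM$ restricted to $\bigcup_c I'_c$, exactly as in \Cref{alg:2pmc}, and output the $C'$ produced by the active instance with smallest $\guess$ that did not abort in either pass.

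For correctness, I would fix a guess $\guess^*\in(\opt,(1+\eps)\opt]$ that is active at the end of the first pass; if none is, the returned $\guess'$ already satisfies $\guess'\le\opt$ and we are done. The first-pass analysis of \Cref{thm:guessmc} implies that $\inst(\guess^*)$ does not abort, producing a pivot set $C_E$. For each $c\in C_E$ let $s_c$ be its optimum center; since pivots are $>2\guess^*\ge 2\opt$ apart, the $s_c$'s are distinct and form an independent subset of the optimum. In the second pass, $d(c,s_c)\le\opt\le\guess^*$, so we try to add $s_c$ to $I'_c$ and hence $s_c\in\spn(I'_c)$. \Cref{lem:matstruct} then produces an independent set $B\subseteq\bigcup_c I'_c$ hitting each $I'_c$, so matroid intersection returns a $C'$ with $|C'|=|C_E|$ that covers each $c\in C_E$ at distance $\le\guess^*$ (since $I'_c$ only contains points within $\guess^*$ of $c$). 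For any stream point $e$: if $e\in E(\guess^*)$ then $d(e,C_E)\le 2\guess^*$ by the first-pass pivot-selection rule and the monotonicity of $C$; otherwise, \Cref{lem:club} gives a representative $\rho_e\in C_o(\guess^*)$ with $d(e,\rho_e)\le\eps\guess^*$, and when $\inst(\guess^*)$ processes $C_o(\guess^*)$, $\rho_e$ either joins $C_E$ as a pivot or is absorbed into one at distance $\le 2\guess^*$, so $d(\rho_e,C_E)\le 2\guess^*$. A triangle-inequality step gives $d(e,C')\le(3+\eps)\guess^*\le(3+\eps)(1+\eps)\opt$, which collapses to $(3+\eps)\opt$ after rescaling $\eps$.

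The space and running-time bounds mirror those of \Cref{thm:guessmc}: at most $O(\lobebe)$ instances are simultaneously active, each storing $O(r^2)$ points, giving $O(r^2\lobebe)$ total space; processing each stream point costs $O(r)$ per instance over the two passes; each child-spawn initialization costs $O(r^2)$ with at most $O(\ldbe)$ distinct guesses across the run; and the final matroid intersection costs $O(r^{3.5})$ per active instance via Cunningham's algorithm~\cite{Cunningham86}, yielding total running time $O((nr+r^{3.5})\lobebe+r^2\ldbe)$. The main obstacle is that the second pass cannot re-examine points forgotten when child instances were spawned in the first pass; the $(I'_c)_{c\in C_E}$ are built afresh against the tight condition $d(e,c)\le\guess$, so I must lean on \Cref{lem:club} (ultimately on the span-propagation \Cref{lem:nspanso}) to ensure that each such forgotten $e$ still sits within $(2+\eps)\guess^*$ of $C_E$, so that the stream-strapping slack contributes only an additive $\eps\guess^*$ to the final cost rather than compounding across the chain of parent-child transitions and spoiling the $(3+\eps)$ ratio.
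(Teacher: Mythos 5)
Your proposal is correct and matches the paper's intended argument for this theorem, which is simply to combine the two-pass scheme of \Cref{alg:2pmc} with the guess-handling machinery of \Cref{thm:guessmc}: the first pass (with stream-strapping) fixes the pivot sets, the second pass rebuilds the independent sets under the tight condition $d(e,c)\le\guess$ so that \Cref{lem:matstruct} applies directly to the optimum centers, and the distance part of \Cref{lem:club} bounds the extra $\eps\guess$ incurred by forgotten points. One small observation: since the second pass re-reads the entire stream and constructs the $I'_c$ from scratch, the span-propagation via \Cref{lem:nspanso} is not actually needed here (unlike in the one-pass setting); only the $d(e,\rho_e)\le\eps\guess$ part of \Cref{lem:club} is used.
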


Reducing the space usage for matroid center with $z$ outliers can be done by
naturally combining the techniques above and those
in~\Cref{sec:matroid-center-with}.  We define a similar overloading
\Call{matroid-center-z-outliers}{$\guess$, $C_o$, $(J_{c_o})_{c_o\in C_o}$,
  $F_o$, flag}, where $F_o$ contains the set of free points in $\inst(\guess_o)$
when it aborted and this function was called with the updated guess $\guess$.
We skip the details and state the following theorem without proof.
\begin{theorem}
  There is an efficient $(51+\eps)$-approximation one-pass algorithm for matroid
  center with $z$ outliers that stores at most $O((r^2+rz)\lobebe)$ points.  With
  a brute force algorithm, one can get a $(15+\eps)$-approximation.
  \label{thm:mczoguess}
\end{theorem}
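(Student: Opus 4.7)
The plan is to lift the guess-handling technique of \Cref{thm:guessmc} to the outlier algorithm of \Cref{alg:1pmcwo}, using the overloaded signature \Call{matroid-center-z-outliers}{$\guess$, $C_o$, $(J_{c_o})_{c_o\in C_o}$, $F_o$, flag} indicated in the theorem. In the child instance I first replay the inherited summary as if it were a stream prefix: each $c_o \in C_o$ is processed like a point of the actual stream (if $d(c_o,c)\le 4\guess$ for some current pivot $c$, try to merge all of $J_{c_o}$ into $I_c$; otherwise make $c_o$ a pivot and initialize $I_{c_o}\gets J_{c_o}$), and each point of $F_o$ is processed as a free point. Then the actual stream is consumed exactly as in \Cref{alg:1pmcwo}. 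Running $O(\lobebe)$ active guesses in parallel, the space bound $O((r^2+rz)\lobebe)$ is immediate from \Cref{lem:existence_wo} applied to each copy.

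For correctness, the key intermediate step is an analogue of \Cref{lem:club} adapted to this setting: every point $e$ that arrived before the substream $E(\guess)$ has a representative $\rho_e$ in $C_o(\guess)\cup F_o(\guess)$ with $d(e,\rho_e)\le \eps\guess$, and if $\rho_e\in C_o(\guess)$ then the inherited independent set $J_{\rho_e}$ spans $e$ (while if $\rho_e\in F_o(\guess)$ we have the trivial ``spanning'' witness $\{\rho_e\}$). The induction on the ancestor chain is essentially the same as in \Cref{lem:club}, after noting that in the outlier version the radius around each pivot is $4\guess_o$ rather than $2\guess_o$; choosing $\beta$ so that $\guess = \guess_o(1+\eps)^\beta \ge (4+\eps)\guess_o/\eps$ gives the desired $d(e,\rho_e)\le \eps\guess$. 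Spanning is preserved through transitions by \Cref{lem:nspanso} exactly as before.

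Given this, I follow the proof of \Cref{thm:mczoalgo}. Fix the guess $\guess\in(\opt,(1+\eps/50)\opt]$ in the window and let $C_E$, $F_E$, $(I_c)_{c\in C_E}$ be the final variables of $\inst(\guess)$. Repeating the argument of \Cref{lem:existence_wo}, for every optimum center $s_c$ serving some $x_c\in A_c$ either $s_c$ appeared in $E(\guess)$ and hence lies in $\spn(I_{c'})$ for some $c'\in C_E$ with $d(s_c,c')\le 4\guess$, or $s_c$ arrived earlier and its representative $\rho_{s_c}$ gives $d(s_c,c')\le (4+\eps)\guess$ together with $s_c\in\spn(I_{c'})$ by \Cref{lem:nspanso}. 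The same dichotomy applies to optimum centers serving surviving free points in $F_E$. Applying \Cref{lem:matstruct} to the resulting collection (together with the singletons from $I^*\cap F_E$) produces an independent set $B\subseteq F_E\cup\bigcup_{c\in C_E}I_c$ whose cost with respect to $C_E$ is at most $(11+O(\eps))\guess$ and which covers all but $z$ points of $F_E$ at cost $(9+O(\eps))\guess$; brute force verifies existence of $B$ and outputs a $(15+\eps)$-approximation, while the offline algorithm of Harris et al.~\cite{Harris17} run on $F_E\cup\bigcup_{c\in C_E}(A_c\cup I_c)$ produces a $(45+O(\eps))\guess$-cost set that blows up to $(47+O(\eps))\guess$ with respect to $C_E$. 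Triangle inequality with the $4\guess$ radius around pivots and the $\eps\guess$ slack from \Cref{lem:club} then yields $(15+\eps)\opt$ and $(51+\eps)\opt$ respectively.

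The main obstacle is bookkeeping for the free points across instance transitions: unlike pivots, a free point $e\in F_o(\guess_o)$ may either remain free in $\inst(\guess)$ or be absorbed into some new $I_c$ together with its ``singleton'' spanning witness, and the proof of the $F_E$-cost bound must still hand an optimum center serving a free point a representative whose distance bound stacks with the pivot radius $4\guess$. Handling this requires being careful that (i) a surviving free point in $F_E$ is itself a valid element of $B$ (so the covering of $F_E$ at cost $(9+O(\eps))\guess$ goes through), and (ii) when a previously free point is absorbed into a new $I_c$, the spanning witness property is inherited by \Cref{lem:nspanso}. Once these two cases are properly separated, the rest of the analysis parallels \Cref{thm:guessmc,thm:mczoalgo} cleanly.
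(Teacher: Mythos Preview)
Your proposal is correct and follows precisely the approach the paper indicates: the paper itself omits the proof entirely, merely pointing to the overloaded signature \Call{matroid-center-z-outliers}{$\guess$, $C_o$, $(J_{c_o})_{c_o\in C_o}$, $F_o$, flag} and saying the details are a natural combination of \Cref{sec:matroid-center-with} and the guess-handling of \Cref{thm:guessmc}. You have supplied exactly those details (the $4\guess$-radius analogue of \Cref{lem:club}, the free-point bookkeeping across transitions, and the rerun of the \Cref{lem:existence_wo} argument with $O(\eps)$ slack), so there is nothing to correct.
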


\subsection*{Extension to Knapsack Center}
\label{sec:extens-knaps-cent-1}

In \Cref{sec:extens-knaps-cent}, we saw how to modify \Cref{alg:1pmc} to get an
algorithm for knapsack center that stores at most $2r$ points, where $r$ is the
size of a largest feasible set.  Using the same idea, algorithms for two-pass
matroid center, matroid center with outliers, and smaller space matroid center,
which are \Cref{alg:2pmc,alg:1pmcwo,alg:effchen}, can be extended to the
knapsack center without losing the approximation ratio and with a space $r$ factor smaller than the matroid case.  For the outlier version of knapsack center, to get an efficient
algorithm, we use the $3$-approximation algorithm by Chakrabarty and
Negahbani~\cite{chakrabartyN18}.  So we get the following theorems, where $r$ is
the size of a largest feasible set.

\begin{theorem}\label{thm:guesskc}
  There is an efficient $(17+\eps)$-approximation one-pass algorithm for
  knapsack center that stores at most $O(r\lobebe)$ points.  With a brute force
  algorithm, one can get a $(7+\eps)$-approximation.
\end{theorem}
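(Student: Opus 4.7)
The plan is to combine the knapsack modification of \Cref{sec:extens-knaps-cent} with the guess-handling framework of \Cref{alg:1pmcss}. Concretely, we run $O(\lobebe)$ parallel active guesses as in \Cref{alg:1pmcss}, but within each instance $\inst(\guess)$ we replace the independent set $I_c$ stored at each pivot $c$ by a single \emph{lightest representative} $s_c$: whenever a new point $e$ with $d(e,c)\le 2\guess$ arrives, we set $s_c\gets e$ if $w(e)<w(s_c)$. When a parent instance $\inst(\guess_o)$ is demoted, its child $\inst(\guess)$ receives $C_o=\{c : c \text{ is a pivot of }\inst(\guess_o)\}$ together with the representatives $(s_{c_o})_{c_o\in C_o}$ as its initial stream; these are processed exactly as ordinary stream points. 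Since each instance now stores at most $2r$ points (pivots plus representatives), the total space across active guesses is $O(r\lobebe)$.

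The key structural ingredient replacing \Cref{lem:nspanso} is trivial in the knapsack setting: if $w(\rho)\le w(e)$ and $w(\rho')\le w(\rho)$, then $w(\rho')\le w(e)$, so replacing any center by a nearby lighter representative preserves knapsack feasibility. Armed with this, I would prove the knapsack analog of \Cref{lem:club}: any point $e$ arriving before the substream $E(\guess)$ admits $\rho_e\in C_o(\guess)$ with $d(e,\rho_e)\le\eps\guess$ and $w(s_{\rho_e})\le w(e)$. The induction on the number of ancestors is identical to the proof of \Cref{lem:club}; at the inductive step, the representative $\rho_{e'}$ of $e'$ in $\inst(\guess)$ satisfies $d(e,\rho_{e'})\le \eps\guess_o+2\guess_o=\eps\guess$ by the triangle inequality, and its stored lightest point has weight no larger than $w(s_{e'})\le w(e)$ by monotonicity of weight minima along the representative chain.

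Given this, the approximation analysis follows that of \Cref{thm:guessmc} essentially verbatim. For the guess $\guess\in(\opt,(1+\eps)\opt]$, for each pivot $c\in C_E$ in $\inst(\guess)$ with optimum-serving center $s_c$, we locate a pivot $c'\in C_E$ with $d(s_c,c')\le(2+\eps)\guess$ (using \Cref{lem:club} when $s_c$ arrived before $E(\guess)$, and a direct bound of $2\guess$ otherwise), and the stored representative $s_{c'}$ satisfies $w(s_{c'})\le w(s_c)$. Selecting one such $s_{c'}$ for each $c\in C_E$ (keeping the lighter copy under collisions) yields a feasible knapsack set $K\subseteq\{s_c : c\in C_E\}$ with $d(c,K)\le(5+2\eps)\guess$ for all $c\in C_E$. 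The brute-force branch can test for such a $K$ and returns a $(7+3\eps)\guess$-cost solution via triangle inequality applied to the points forgotten before $E(\guess)$; the efficient branch invokes the $3$-approximation of Hochbaum and Shmoys~\cite{Hochbaum86} on $\{s_c : c\in C_E\}$ (with the promise $\guess'=(5+2\eps)\guess$) to obtain a set of cost at most $(17+7\eps)\guess$. Combined with $\guess\le(1+\eps)\opt$, this yields the claimed ratios after absorbing the $(1+\eps)$ factor into $\eps$.

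The only real subtlety, and thus the main thing to verify carefully, is that replacing the matroid-span transitivity used in \Cref{lem:club} with weight monotonicity genuinely gives a feasible solution in the aggregated representative chain; this is immediate since the knapsack budget constraint depends only on the sum of weights of selected centers, which can only decrease when each selected center is swapped for its nearby lighter representative. Consequently the extension is strictly simpler than its matroid counterpart, and no additional ideas beyond \Cref{alg:1pmcss} and \Cref{sec:extens-knaps-cent} are needed.
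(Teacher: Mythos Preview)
Your proposal is correct and follows exactly the route the paper sketches: the paper gives no detailed proof of \Cref{thm:guesskc} beyond the remark that the knapsack modification of \Cref{sec:extens-knaps-cent} plugs directly into the guess-handling framework of \Cref{alg:1pmcss}, and you have filled in precisely those details, including the knapsack analogue of \Cref{lem:club} (weight monotonicity replacing span transitivity) and the cost accounting of \Cref{thm:guessmc}.

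One small imprecision: you write that the child instance processes $C_o$ and the representatives ``exactly as ordinary stream points,'' but your inductive step for the \Cref{lem:club} analogue implicitly uses the special processing of lines~\ref{line:ss:ngs}--\ref{line:ss:nge} in \Cref{alg:1pmcss}, namely that when an old pivot $c_o$ is merged into a new pivot $c$, its \emph{representative} $s_{c_o}$ (not $c_o$ itself) is compared against $s_c$. If $c_o$ and $s_{c_o}$ were fed as two independent stream points, $s_{c_o}$ could be absorbed by a different pivot than the one absorbing $c_o$, and the sentence ``its stored lightest point has weight no larger than $w(s_{e'})$'' would not follow for the pivot $\rho_{e'}$ you name. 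The fix is exactly what \Cref{alg:1pmcss} already does in the matroid case, so this is only a wording issue; with that clarification your argument is complete.
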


\begin{theorem}
  There is an efficient $(51+\eps)$-approximation one-pass algorithm for
  knapsack center with $z$ outliers that stores at most $O(rz\lobebe)$ points.  With a
  brute force algorithm, one can get a $(15+\eps)$-approximation.
  \label{thm:kczoguess}
\end{theorem}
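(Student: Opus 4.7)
The plan is to combine three ingredients already developed in the paper: the knapsack modification of \Cref{alg:1pmc} from \Cref{sec:extens-knaps-cent}, the outlier-handling framework of \Cref{alg:1pmcwo}, and the stream-strapping machinery of \Cref{sec:handling-guesses}. First, I would write down the knapsack analogue of \Cref{alg:1pmcwo} by making every independent set $I_c$ a singleton holding a minimum-weight point within distance $4\guess$ of the pivot $c$, and by replacing the independence check at \Cref{line:scadd1,line:scadd2} with the rule ``replace the stored representative by $e$ if $w(e)$ is strictly smaller''. The free-point bucket $F$ is handled exactly as in \Cref{alg:1pmcwo}; the $A_c$ buckets of size $z+1$ are still maintained to certify that each pivot has an optimum center nearby. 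Because a singleton is stored per pivot and because the check on $|F|$ still caps the number of free points by $(r-\ell+1)z+1$, at most $O(rz)$ points are stored per guess at any moment, where $r$ is the size of a largest feasible set.

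Next I would prove the knapsack analogue of \Cref{lem:existence_wo}: for $\guess\ge\opt$ there is a feasible set $K\subseteq F_E\cup\bigcup_{c\in C_E}I_c$ of cost at most $15\guess$, with $d(c,K)\le 11\guess$ for every pivot $c\in C_E$ and $K$ covering all but $z$ points of $F_E$ at cost $9\guess$. The proof mimics \Cref{lem:existence_wo}, but the role of \Cref{lem:matstruct} is replaced by the simple observation underlying \Cref{lem:existence_knapsack}: for any optimum center $s$ that we tried to store in $I_{a(s)}$, the stored singleton has weight at most $w(s)$, so swapping $s$ for the stored point keeps the solution within budget. This gives a feasible $K$ without any matroid-exchange argument, and the triangle-inequality calculation producing $9\guess$, $11\guess$, and $15\guess$ is identical to that in \Cref{fig:mccost_wo}.

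Then I would apply the stream-strapping wrapper of \Cref{alg:1pmcss}, adapted to the outlier algorithm exactly as sketched just before \Cref{thm:mczoguess}: each child instance $\inst(\guess)$ is handed the current pivots $C_o$ with their singletons $J_{c_o}$ and the current free-point set $F_o$ as an initial substream. Since each singleton is just a point (no span to carry across), the knapsack analogue of \Cref{lem:club} reduces to a pure triangle-inequality statement: a forgotten point $e$ has a representative $\rho_e\in C_o(\guess)$ with $d(e,\rho_e)\le\eps\guess$, and the stored weight of $\rho_e$ is at most $w(e)$ along the chain of ancestors. This inductive weight-monotonicity replaces the role of \Cref{lem:nspanso}. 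Then the exact argument of \Cref{thm:guessmc} and \Cref{thm:mczoguess} shows that some active guess $\guess'\le(1+\eps)\opt$ succeeds, yielding summaries of total size $O(rz\lobebe)$ and the $(15+\eps)$ approximation by brute-force search over $F\cup\bigcup_c I_c$.

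For the efficient $(51+\eps)$ variant I would feed $F\cup\bigcup_c(A_c\cup I_c)$ into the offline $3$-approximation algorithm of Chakrabarty and Negahbani~\cite{chakrabartyN18} for knapsack center with outliers, in place of the Harris et al.\ matroid-center-with-outliers routine used inside \Cref{alg:1pmcwo}. The approximation bookkeeping matches that of \Cref{thm:mczoalgo} (a factor of $3$ on a radius of $15\guess'$ around the $A_c$'s, plus the $4\guess'$ slack from forgetting points within $4\guess'$ of a pivot, plus the $\eps\guess'$ slack accumulated across the ancestor chain), giving the claimed $(51+\eps)$ ratio. The main delicate point I anticipate is verifying that weight-monotonicity really is preserved when a child instance merges several old pivots into a single new pivot: I must ensure that after the merge the new singleton is the minimum-weight survivor across all absorbed ancestors, which follows by taking the minimum-weight element among the $J_{c_o}$'s mapped to the same new pivot when initializing $I_c$.
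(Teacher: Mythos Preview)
Your proposal is correct and matches the paper's own approach: the paper does not give a detailed proof of this theorem but simply states that the knapsack modification of \Cref{sec:extens-knaps-cent} (singletons storing a minimum-weight nearby point), the outlier framework of \Cref{alg:1pmcwo}, the stream-strapping of \Cref{sec:handling-guesses}, and the offline $3$-approximation of Chakrabarty and Negahbani~\cite{chakrabartyN18} combine to yield the result. Your write-up fills in precisely those details, including the observation that weight-monotonicity along the ancestor chain plays the role that \Cref{lem:nspanso} plays in the matroid case, and the point about taking the minimum-weight survivor when several old pivots merge into one new pivot.
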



\subparagraph*{Acknowledgements.}  We thank Ashish Chiplunkar for his
contributions, Maryam Negahbani for discussions, and anonymous reviewers for helpful comments.


\bibliographystyle{alpha}
\bibliography{ref}

\appendix
\section{An Implementation of Efficient Matroid Center}
\label{sec:effic-appr-algor}

We now give an implementation of \Call{EfficientMatroidCenter}{}.  The input
consists of $\offguess$, $C_E$, $X$, such that $C_E\subseteq X$, and the
underlying matroid $\cM$ defined over $X$.  Furthermore, the promise is that
there is an independent set $B\subseteq X$ such that for each $c\in C_E$, we
have $d(c,B)\le \offguess$.  Our implementation is based on the algorithm of
Chen et al.~\cite{Chen2016} for matroid center.  We show that it outputs a set
$C'$ such that, assuming the promise, $d(c,C')\le 3\offguess$ for $c\in C_E$.
\begin{algorithm}
  \begin{algorithmic}[1]
    \Function{EfficientMatroidCenter}{$\offguess$, $C_E$, $X$,
      $\cM$}

    \State Initialize: $C\gets \emptyset$.

    \While{there is an unmarked point $e$ in $C_E$}

    \State $C \gets C \cup \{e\}$, $B_e \gets \bB(e,\offguess)\cap X$, and mark
    all points in $\bB(e,2\offguess)\cap C_E$. \label{line:ol:mark}

    \EndWhile

    \State Let $\cM_C=(\cup_{c \in C} B_c,\cI_C)$ be a partition matroid with
    partition $\{B_c: c\in C\}$ and capacities $1$.

    \State Let $\cM'$ be the matroid $\cM$ restricted to $\cup_{c \in C} B_c$.

    \State $C' \gets $ \Call{matroid-intersection}{$\cM_C$, $\cM'$}

    \If{$|C'| < |C|$}

    \State Return fail.

    \EndIf

    \State Return $C'$.

    \EndFunction
    \caption{Efficient algorithm for matroid center based on the algorithm
      by~\cite{Chen2016}.}
    \label{alg:effchen}
  \end{algorithmic}
\end{algorithm}

\begin{theorem}
  If \Call{EfficientMatroidCenter}{} does not fail, then it outputs a set $C'$
  such that $d(c,C')\le 3\offguess$ for each $c\in C_E$.  If the input
  promise holds, then \Call{EfficientMatroidCenter}{} does not fail.
  \label{thm:chencopy}
\end{theorem}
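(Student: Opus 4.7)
The plan is to verify the two claims separately. For the first claim, suppose the algorithm returns $C'$ with $|C'|=|C|$. Since $C'$ is independent in the partition matroid $\cM_C$ with parts $\{B_c:c\in C\}$ and unit capacities, each $B_c$ contains exactly one point of $C'$, call it $c^*\in B_c$, so $d(c,c^*)\le\offguess$. Now take an arbitrary $c_0\in C_E$. By the greedy loop, either $c_0$ was itself picked into $C$, in which case $c_0^*\in C'$ satisfies $d(c_0,c_0^*)\le\offguess$; or $c_0$ was marked when some $c\in C$ was added, which means $d(c,c_0)\le 2\offguess$, and then triangle inequality gives $d(c_0,c^*)\le d(c_0,c)+d(c,c^*)\le 2\offguess+\offguess=3\offguess$.

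For the second claim, I would show that under the promise the matroid-intersection step produces a common independent set of size $|C|$, so the algorithm does not fail. The crucial structural property is that distinct pivots $c,c'\in C$ satisfy $d(c,c')>2\offguess$: indeed, the one added later was still unmarked, hence lies outside $\bB(\cdot,2\offguess)$ of the earlier one. Using the promised independent set $B\subseteq X$ with $d(c,B)\le\offguess$ for every $c\in C_E$, pick for each $c\in C$ a witness $b_c\in B$ with $d(c,b_c)\le\offguess$, so $b_c\in B_c$. For $c\neq c'$ in $C$, the triangle inequality gives
\[
d(b_c,b_{c'})\ge d(c,c')-d(c,b_c)-d(c',b_{c'})>2\offguess-\offguess-\offguess=0,
\]
so $b_c\neq b_{c'}$. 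Hence $\{b_c:c\in C\}$ is a set of $|C|$ distinct points, one in each part $B_c$, and it is a subset of the independent set $B$, so it is independent in $\cM'$ (the restriction of $\cM$) and in $\cM_C$. Therefore the matroid intersection returns a set of cardinality at least $|C|$, and by the partition matroid constraint it is at most $|C|$; equality holds and the algorithm does not fail.

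The only mildly delicate step is the second one, namely showing the $b_c$'s are distinct; but this is immediate from the pivot-separation $d(c,c')>2\offguess$ guaranteed by the marking rule, matched with the radius-$\offguess$ balls defining the parts. Everything else is triangle inequality and the basic fact that a subset of an independent set is independent, so no heavier matroid machinery is required beyond invoking matroid intersection as a subroutine.
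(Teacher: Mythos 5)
Your proposal is correct and follows essentially the same argument as the paper: the first part via the marking rule plus triangle inequality, and the second part by exhibiting the common independent set $\{b_c : c\in C\}$ (the paper's ``coverers''), using the pivot separation $d(c,c')>2\offguess$ to ensure distinctness.
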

\begin{proof}
  In this proof, we refer by $C$ the contents of the variable $C$ after the
  while loop ended, and let $c_E$ be any arbitrary point in $C_E$.  Define the
  function $\Mark : C_E \rightarrow C$ such that $\Mark(c_E) \in C$ is the
  ``marker'' of $c_E$, i.e., we marked $c_E$ when processing $\Mark(c_E)$.  In
  the end, all $c_E$'s are marked, so $\Mark$ is a valid function.  By the logic
  on~\Cref{line:ol:mark}, we have that
  \begin{equation}
    \label{eq:1}
    d(c_E,\Mark(c_E))\le 2\offguess\,.    
  \end{equation}

  Let \Call{EfficientMatroidCenter}{} does not fail, then $|C'| \ge |C|$ and
  $C'$ satisfies the partition matroid constraint of $\cM_C$.  By definition of
  $\cM_C$, $\rank(\cM_C) = |C|$, hence $|C'|\le |C|$, which implies that
  $|C'|=|C|$. Therefore, for each $c\in C$, the set $C'$ must contain exactly
  one element in $\bB(c,\offguess)$ and $d(c,C')\le \offguess$, in particular,
  $d(\Mark(c_E),C')\le \offguess$.  This, triangle inequality,
  and~Inequality~\eqref{eq:1} gives
  \[
    d(c_E,C') \le d(c_E,\Mark(c_E)) + d(\Mark(c_E),C') \le 2\offguess +
    \offguess = 3\offguess\,,
  \]
  which proves the first part of the statement of the lemma.  We prove the
  second part next.

  Assume that the promise holds.  Then let $B$ be the set such that cost of $B$
  is at most $\offguess$ with respect to $C_E$, in particular, with respect to
  $C$.  For $c\in C$, define $\Cov(c)\in B$ to be an arbitrarily chosen
  ``coverer'' of $c$, i.e.,
  \begin{equation}
    \label{eq:2}
    d(c,\Cov(c))\le \offguess\,.    
  \end{equation}
  Then the set $B' := \{\Cov(c) : c\in C\}$ is a subset of $B$, so it is
  independent in $\cM$.  Now, for $c, c' \in C$, such that $c\neq c'$, we have
  $\Cov(c)\neq \Cov(c')$ by~Inequality~\eqref{eq:2} because
  $d(c,c') > 2\offguess$.  This implies that $|B'| = |C|$.  Next,
  $\Cov(c) \in B'\cap B_c$ for each $c\in C$, hence the set $B'$ is also
  independent in $\cM_C$.  Therefore $B'\in \cM_C\cap\cM'$, and
  \Call{matroid-intersection}{} returns an independent set of size $|C|$, i.e.,
  it does not fail.
\end{proof}

\begin{remark}
  By running $\binom{|X|}{2}$ guesses, \Call{EfficientMatroidCenter}{} can be
  used to get an offline $3$-approximation algorithm for a more general version
  of matroid center, where the cost is computed with respect to a subset $C_E$
  of $X$ and any point in $X$ can be a center.
\end{remark}


\end{document}